\titleformat*{\section}{\fontsize{16}{19}\bfseries\selectfont}
\titleformat*{\subsection}{\fontsize{13}{17}\bfseries\selectfont}
\let\appendixpagenameorig\appendixpagename
\renewcommand{\appendixpagename}{\Large\appendixpagenameorig}
\renewcommand{\baselinestretch}{1.2}
\newcommand{\linebr}{\\[0.2em]}
\DeclareMathAlphabet{\mathpzc}{OT1}{pzc}{m}{it}
\definecolor{myblue1}{RGB}{0, 0, 139}
\newtheorem{theorem}{Theorem}[section]
\newtheorem{lemma}[theorem]{Lemma}
\newtheorem{proposition}[theorem]{Proposition}
\newtheorem*{proposition*}{Proposition}
\newtheorem*{consequence}{Physical implication}
\newtheorem*{assumption(A)}{Condition (A)}
\theoremstyle{remark}
\newtheorem*{remark}{{\bf Remark}}
\numberwithin{equation}{section}
\newcommand{\Lie}{\mathcal L}
\newcommand{\VV}{\mathscr V}
\newcommand{\HH}{\mathcal H}
\newcommand{\NN}{\mathcal N}
\newcommand{\tg}{\tilde{g}}
\newcommand{\kil}{\xi}
\newcommand{\bv}{{\mathbf{v}}}
\newcommand{\bp}{{\mathbf{p}}}
\newcommand{\bu}{{\mathbf{u}}}
\newcommand{\dd}{{\mathrm d}}
\newcommand{\vol}{\mathrm{vol}}
\newcommand{\Vol}{\mathrm{Vol}}
\newcommand*\DAlembert{\mathop{}\!\mathbin\Box}
\newcommand{\Diff}{{\rm Diff}}
\newcommand{\beq}{\begin{equation}}
\newcommand{\eeq}{\end{equation}}
\newcommand{\bal}{\begin{align}}
\newcommand{\eal}{\end{align}}
 \newtcolorbox{empheqboxed}{
 opacityback=0,
 enhanced jigsaw,
 width=\textwidth,
 boxrule=.5pt,
 sharpish corners,
 left=0pt,
 right=2pt,
 top=-9pt, 
 bottom=3pt
}
\newcommand{\LL}{{\text{\scalebox{.9}{$L$}}}}
\newcommand{\RR}{{\text{\scalebox{.9}{$R$}}}}
\newcommand{\PPP}{{\text{\scalebox{.8}{$P$}}}}
\newcommand{\MMM}{{\text{\scalebox{.8}{$M$}}}}
\newcommand{\KKK}{{\text{\scalebox{.8}{$K$}}}}
\newcommand{\Szero}{{\text{\scalebox{.9}{$0$}}}}
\newcommand{\SSzero}{{\text{\scalebox{.7}{$0$}}}}
\newcommand{\MT}{{\text{\scalebox{.9}{${\sf MT^{-1}}$}}}}
\def\blfootnote{\xdef\@thefnmark{}\@footnotetext}
\begin{document}

\begin{titlepage}

\title{\vspace{-.5cm} \bf Test particles  in Kaluza-Klein models}

\author{{Jo\~ao Baptista}}  
\date{June 2024}

\maketitle

\thispagestyle{empty}
\vskip 10pt
{\centerline{{\large \bf{Abstract}}}}
\noindent
Geodesics in general relativity describe the behaviour of test particles in a gravitational field. In 5D Kaluza-Klein, geodesics reproduce the Lorentz force motion of particles in an electromagnetic field. This paper studies geodesic motion on a higher-dimensional $M_4 \times K$ with background metrics encoding general 4D gauge fields and Higgs-like scalars. It shows that the classical mass and charge of a test particle become variable quantities when the geodesic traverses regions of spacetime with massive gauge fields, such as the weak force field, or with non-constant Higgs scalars. This agrees with the physical fact that interactions mediated by massive bosons can change the mass and charge of particles. The variation rates of mass and charge along a geodesic are given by natural geometric formulae. In regions where mass is preserved, there are additional constants of motion, one for every abelian or simple summand in the Killing algebra of $K$. The last part of the paper discusses traditional difficulties of Kaluza-Klein models, such as the low $q \mkern 1mu / \mkern 1mu m$ ratios in the 5D model. It suggests possible ways to circumvent them. It also remarks the naturalness of a model in which elementary particles always travel at the speed of light in higher dimensions.

\vspace{-1.0cm}
\let\thefootnote\relax\footnote{
\noindent
{\small {\sl \bf  Keywords:} Kaluza-Klein theories; geodesic motion; Riemannian submersions; test particles; momentum; rest mass; charge.}
}

\end{titlepage}

\pagenumbering{roman}

\renewcommand{\baselinestretch}{1.14}\normalsize
\tableofcontents
\renewcommand{\baselinestretch}{1.18}\normalsize

\newpage 

\pagenumbering{arabic}

\section{Introduction and overview of results}

\noindent
This paper studies geodesic motion on a higher-dimensional spacetime $M_4 \times K$ and how it is perceived through the lens of 4D physics. The setting is more general than usual, with background Kaluza-Klein metrics that encode both massless and massive 4D fields. Geometrically, those metrics determine Riemannian submersions with fibres that need not be totally geodesic. The main physical motivation is the notion that, in these conditions, a particle’s classical mass and electromagnetic charge should not be strict constants of motion. When massive gauge fields are present, or when the Higgs fields are not constant, physical particles can interact with massive bosons, so their own charge and mass can change. Thus, it would be natural to have definitions of mass and charge of a test particle that enabled such variations along a geodesic. This would also illustrate how a higher-dimensional, classical theory can produce 4D effects (particle charge and mass change) that usually only appear in QFT calculations.

Thinking along those lines, the author was led to consider how 4D proper time, denoted $\tau$, evolves along an affine geodesic $\gamma(s)$ on the spacetime $P = M_4\times K$. In this setting, the covariantly conserved momentum is the higher-dimensional tangent vector
\beq \label{MassDefinition0}
p(s)  \ = \  \sigma \, \frac{\dd \gamma}{\dd s}  \ =  \   \sigma \, \frac{\dd \tau}{\dd s}\; \frac{\dd \gamma}{\dd \tau} \ = \ m(s) \; \frac{\dd \gamma}{\dd \tau}  \ , 
\eeq
where $\sigma$ is a constant with dimension $\MT$. The factor $m(s) = \sigma \, \frac{\dd \tau}{\dd s}$ is the particle's rest mass, as can be recognized by projecting the relation to $M_4$. So a non-constant rate of change $\frac{\dd \tau}{\dd s}$ is perceived in 4D as a variable mass. As will be seen later, it turns out that 4D proper time and the geodesic parameter $s$ need not be proportional to each other in regions where the internal geometry is not covariantly constant. So the particle's classical mass $m(s)$ can vary when the geodesic traverses those regions, as desired.

It also turns out that the phenomenon of geodesic mass variation had been previously noticed in the Kaluza-Klein literature. Specifically, in studies of the 5D model, with $K= \mathrm{U}(1)$ and mass variation coming from a non-constant size of the internal circle \cite{Kovacs, GK, CP, CE, LM}. In that electromagnetic setting, however, it was regarded as an undesirable difficulty of Kaluza-Klein theory. Rest mass variation is not observed in real electromagnetic interactions, so it was not easy to offer a physical understanding of the phenomenon. Arguments were proposed to avoid it or find alternative interpretations \cite{SW, Mi}. Another example was described in \cite{Feher}, where $K$ is a group manifold with variable size and fixed, bi-invariant geometry. Still, the physical meaning of mass variation remained unclear.

A wider perspective seems to be key in this instance. For a higher-dimensional $K$ with richer geometry, it becomes possible to distinguish the separate effects on geodesics caused by electromagnetic-like gauge fields, non-abelian massless gauge fields, massive gauge fields and Higgs-like scalars. And when one recognizes that the only fields able to change the mass of a test particle are the massive gauge fields (such as the weak force field) and the Higgs-like scalars, a natural physical interpretation emerges. The phenomenon of geodesic mass variation seems to describe the physical fact that interactions mediated by massive fields and massive bosons are able to change the mass of particles.

What about charge? In the geodesic model, all internal motion is perceived as mass in four dimensions, but electromagnetic charge is related to the component of internal momentum along a specific Killing direction inside $K$. So, in this model, there can be mass without charge, but not the other way around, which fits observation. The independent existence of mass and charge can occur only when the dimension of $K$ is greater than one. Otherwise, there is no meaningful distinction between the internal momentum and its component along the Killing field.

The higher-dimensional definition of charge of a test particle has three natural properties: {\it i)} charge is a constant of geodesic motion in regions of spacetime that only have massless gauge fields and constant Higgs scalars; {\it ii)} in those regions, geodesics project down to Lorentz force motions in four dimensions, as usual in 5D Kaluza-Klein; {\it iii)} in regions of spacetime with massive gauge fields, the test particle's charge can vary along a geodesic if and only if the bosons associated with the fields are charged.
Thus, higher-dimensional geodesics give a fairly good account of the properties of charged particles, at least qualitatively. They reflect the physical fact that interactions mediated by charged gauge bosons, such as the Standard Model’s W bosons, can modify the charge of particles, while interactions mediated by massless or by massive, neutral gauge bosons, cannot. This property of higher-dimensional geodesics apparently has not been remarked before. 

Another relevant property of the geodesic model is that particles travelling at the speed of light on $M_4$ cannot interact directly with gauge fields or Higgs-like scalars, at least classically. More precisely, the projection to $M_4$ of the particle's motion on $P$ is independent of the values of such fields.  This property is incompatible with the motion of the old massless neutrinos, since they were thought to interact with the weak field and travel at the speed of light on Minkowski space. So the geodesic model disfavours the existence of massless neutrinos or similar particles. This also does not seem to have been remarked before. Of course, one should still bear in mind that geodesic motion is a classical approximation that disregards back-reaction, quantization and the fermionic nature of particles.

Our study of geodesic motion on $P = M_4 \times K$, for arbitrary $K$,  relies on classical geometrical results about Riemannian submersions developed in \cite{Ehresmann, Hermann, ONeill1, ONeill} and presented in \cite{Besse, FIP}, for example. Extended reviews of the Kaluza-Klein framework can be found in \cite{Bailin, Bleecker, Bou, CFD, CJBook, Duff, WessonOverduin, Witten81}. Some of the early original references are \cite{Kaluza, Klein, EB, Jordan, Thiry, DeWitt, Kerner, Traut, Cho}. Geodesic motion has been studied since the very beginning of Kaluza-Klein literature. The main focus has been on the 5D case, where calculations are more explicit and differential geometric techniques less necessary (e.g. \cite{Kaluza, Kerner, LR, GK, DO, FH, WessonPonce, LMP}). This paper follows the notation in \cite{Bap} and its treatment of massive gauge fields and Higgs-like fields. 
More comments about the literature will be added below, as we give an overview of the main results and observations in this paper.

\subsubsection*{\bf Rest mass variation}

Consider a metric $g_P$ on the higher-dimensional spacetime $P=M_4\times K$ such that the projection to $M_4$ is a Riemannian submersion. As described in section \ref{NullCurves}, this is equivalent to taking a $g_P$ determined by three simpler objects: {\it i)} a metric $g_M$ on the base $M_4$; {\it ii)} a family of Riemannian metrics $g_K(x)$ on the fibres $K$ parametrized by the points  $x \in M_4$; {\it iii)} a gauge one-form $A$ on $M_4$ with values in the vector fields on $K$. These objects determine the higher-dimensional metric through the relations
\begin{align*}
g_P (U, V) \ &= \ g_K (U, V) \nonumber \\
g_P (X, V) \ &= \  - \ g_K \left(A (X), V \right) \nonumber \\
g_P (X, Y) \ &= \ g_M (X, Y) \ + \  g_K \left(A(X) , A(Y) \right) \ ,
\end{align*}
valid for all tangent vectors $X,Y \in TM$ and vertical vectors $U, V \in TK$. These relations generalize the usual Kaluza ansatz for $g_P$. Since the gauge one-form has values in the Lie algebra of vector fields on $K$, the gauge group is $\Diff(K)$ or a subgroup. The gauge group need not act on $K$ only through isometries of $g_K$.

Now let $\gamma (s)$ be a timelike or null geodesic on $(P, g_P)$ representing the motion of a test particle. It is a curve satisfying $\nabla_{\dot{\gamma}} \dot{\gamma} = 0$. Denote by $\gamma_M (s)$ the projection of this curve to Minkowski space.  
The main result of section \ref{MassVariation} says that the particle's rest mass, as  defined in \eqref{MassDefinition0}, changes according to 
\beq \label{DerivativeMass}
c^2\; \frac{\dd}{\dd s} \; m^2(s)  \ = \ - \,  (\dd^A g_K)_{\dot{\gamma}_M} (p^\VV, p^\VV) \ . 
\eeq
Here $p^\VV$ denotes the vertical (internal) component of the particle's momentum vector, while $(\dd^A g_K)_{\dot{\gamma}_M}$ is a covariant derivative of the internal metric along the vector $\dot{\gamma}_M$ tangent to $M_4$. So the rest mass of the particle is a constant of geodesic motion in regions where $g_K$ is covariantly constant, but may change elsewhere. 

The derivative $\dd^A g_K$ measures how $g_K$ changes along $M_4$ up to diffeomorphisms of $K$. It is equivariant under $\Diff (K)$-gauge transformations. Geometrically, it can be identified with the second fundamental form of the fibres of $P$. As in \cite{Bap}, it can be expressed in terms of Lie derivatives and the gauge one-forms $A^a$ as 
\beq \label{DefinitionCovariantDerivative}
(\dd^A g_K)_X (U, V) \ = \  (\Lie_{X}\, g_K) (U, V) \ + \ A^a (X) \; (\Lie_{e_a}\, g_K) (U, V) \ .
\eeq
Here $X$ is any vector in $TM$; $U$ and $V$ are vertical vectors in $TK$; and $\Lie_{e_a}\, g_K$  denotes the Lie derivative along the internal vector field $e_a$. Thus, $\dd^A g_K$ vanishes when the internal metric is constant along $M_4$ and, simultaneously, only gauge fields with values in the Killing vectors of $g_K$ are non-zero. Those are precisely the massless gauge fields, since 
\beq   \label{MassFormula}
\left(\text{Mass} \ A_\mu^a \right)^2 \ \ \propto \ \ \frac{ \int_K  \; \left\langle \Lie_{e_a}\, g_K,  \; \Lie_{e_a}\, g_K \right\rangle  \, \vol_{g_\KKK} }{ 2 \int_K  \,  g_K (e_a ,  e_a  ) \ \vol_{g_\KKK} } 
\eeq
for any (divergence-free) vector field $e_a$ on a compact $K$ \cite{Bap}. Therefore, we conclude that when the geodesic traverses regions of spacetime with changing $g_K$ or with massive gauge fields, such as the weak force field, the particle's classical rest mass $m(s)$ can indeed vary. 

Rest mass variation can be understood as a transfer between 4D momentum and internal momentum. This transfer does not occur in regions with vacuum-type, product geometries $g_M + g_K$. There, the two momentum components are uncoupled and separately conserved. If massless gauge fields are turned on, the geometry is no longer a product, but 4D motion is only lightly coupled to internal motion. The direction of 4D momentum can change, but its magnitude is still conserved. These are Lorentz force-type motions. Internal and 4D momenta will just rotate within the vertical and horizontal subspaces, respectively. In regions where the higher-dimensional geometry is very different from a product, only the full momentum vector conserves its norm along general geodesics. The norms of the 4D and internal components can both change, with opposite signs, due to a rotation of the full momentum mixing those components. This process is perceived in 4D as a change of the particle's rest mass. The geometry distortions that produce it correspond to a $g_P$ encoding 4D massive gauge fields or a non-constant internal metric.

Note that having an internal metric that changes along $M_4$ is equivalent to having non-constant Higgs-like fields. If we regard the components of $g_K$ as fields on $M_4$, they play the exact role of Higgs fields in usual gauge theories. This is apparent from the decomposition of the higher-dimensional scalar curvature $R_{g_\PPP}$ in the Einstein-Hilbert action, 
 \beq \label{GaugedSigmaModelAction0}
\int_P \, R_{g_\PPP}  \vol_{g_\PPP}   \ = \ \int_P \, \Big[\, R_{g_\MMM}  \, + \, R_{g_\KKK} \, - \, \frac{1}{4}\, |F_A|^2 \, - \,  \frac{1}{4}\,  |\dd^A  g_K|^2  \, + \,  |\dd^A \, (\vol_{g_\KKK})|^2 \, \Big] \,\vol_{g_\PPP} \, . 
\eeq
 This formula extends the usual Kaluza-Klein result to the setting of general Riemannian submersions, where the geometry of the fibres can change. For more details, see \cite{Bap}.

\subsubsection*{\bf Charges and charge variation}

Now let us consider particles' charges. Let $\kil$ be an electromagnetic-like Killing vector field of $g_K$. It is an internal Killing field that commutes with all other Killing fields on $K$. If a particle's motion is parameterized by a geodesic $\gamma (s)$ on $P$ with momentum $p = \sigma\, \dot{\gamma}$, we define the particle's charge with respect to $\kil$ as the scalar 
\beq
q_\kil (s) \ := \ -\, g_P(\kil,\, p) \ = \ -\, g_K(\kil, \, p^\VV) \ .
\eeq
Section \ref{MasslessGaugeFields} shows that $q_\kil (s)$ is a constant of geodesic motion in regions where only massless gauge fields are present and the internal geometry does not change. This is true for any metric $g_M$ on $M_4$. But there are more constants of motion in these regions. Essentially, there is one constant for every summand in the decomposition
\beq
\mathfrak{k} \ = \ \mathfrak{a}_1 \oplus \cdots \oplus \mathfrak{a}_m
\eeq
of the Killing algebra of $g_K$ as a sum of abelian or simple subalgebras. That constant of motion measures how orthogonal the derivative vector $\dot{\gamma} (s)$ is to the subspace of $T_{\gamma (s)} P$ spanned by the Killing fields in the respective summand. And, for a geodesic, this scalar is independent of $s$. 

In regions where the internal metric $g_K$ is constant but some massive gauge fields are non-zero, the charges described above are well-defined but are not constants of motion anymore. For example, section \ref{ChargeVariation} shows that an electromagnetic-like charge evolves as 
\beq
\frac{\dd}{\dd s} \; q_\kil (s) \ = \  A^a(\dot{\gamma}_M) \  g_P ( [ \kil , \, e_a], \, p )  
\eeq
along a higher-dimensional geodesic $\gamma (s)$ with momentum $p(s)$. So the test particle's $\kil$-charge may vary when $A^a$ is non-zero and the associated gauge boson is $\kil$-charged (i.e. when $[ \kil , \, e_a] \neq 0$ as a vector field on $K$). This effect apparently has not been reported before. It agrees with the physical fact that interactions mediated by massive, charged gauge bosons, such as the Standard Model's W bosons, can modify the charge of particles.

The charge variation formula can be extended to the scalars associated to the simple summands $\mathfrak{a}_r$ in the Killing algebra of $g_K$. This variation is different from the rotation of isospin along a geodesic found in \cite{Kerner, Wong} and reviewed in \cite{HZ}. More details in section \ref{ChargeVariation}.

\subsubsection*{\bf Equations of geodesic motion}

In section \ref{GeneralGeodesicsSubmersions} we write down the equation of geodesic motion for a general submersive metric on $M_4 \times K$, as translated from \cite{ONeill}. The differences in notation between \cite{ONeill} and the present paper are described in appendix \ref{Auxiliary results}. The horizontal component of that equation says that the projection of the geodesic to four dimensions, denoted $\gamma_M(s)$, satisfies
\beq  \label{GeneralizedLorentzForce0}
g_M \big(\nabla^M_{\dot{\gamma}_M}\, \dot{\gamma}_M , \, X \big) \ = \ g_K (e_a, \, \dot{\gamma}^\VV) \, F_{A}^a (\dot{\gamma}_M , X ) \ + \ \frac{1}{2} \; (\dd^A g_K)
_X (\dot{\gamma}^\VV, \dot{\gamma}^\VV) \ .
\eeq
This is a generalization of the equation derived by Kerner when $K$ is a Lie group with a constant bi-invariant metric \cite{Kerner}. It also extends the equations obtained when $K$ is a circle or group manifold of variable size \cite{Kovacs, GK, CP, LM, Feher}. It is valid for arbitrary gauge fields, arbitrary $K$, and internal metrics $g_K(x)$ that can vary arbitrarily along $M_4$. 

In regions where there are no massive gauge fields and the internal geometry is constant, the term $\dd^A g_K$ vanishes and the equation simplifies. It reduces to the usual Lorentz force equation when only an electromagnetic-like gauge field is present, as in the 5D calculation. This is verified in section \ref{ChargeVariation} using the previous definitions of mass and charge. It justifies the interpretation of the scalar $q_\kil (s)$ as a charge.

\subsubsection*{\bf A unique speed in higher dimensions}

In section \ref{AUniqueSpeed}, we remark the naturalness of the hypothesis that elementary particles always travel at the speed of light in higher dimensions. It is the projection of velocities to three dimensions that appears to produce speeds in the range $[ 0, c]$, as observed macroscopically. This is equivalent to saying that particles always follow null paths on $P$.

This hypothesis is not  entirely unreasonable because null paths on $P$ always project down to timelike or null paths on $M_4$, in the Kaluza-Klein framework. They never project down to spacelike paths. Higher-dimensional null paths can cover all types of causal motion on Minkowski space. So timelike paths on $P$ do not seem necessary. Thus, it is natural to forgo them provisionally and investigate the consistency of a dynamical model entirely based on null paths on $P$.

In these conditions, all particles obey an energy-momentum relation similar to that of photons, but in higher dimensions. It projects down to the usual energy-momentum relation in four dimensions. For example, when $g_P = g_M + g_K$ is a simple product metric, the higher-dimensional momentum vector can be written in an inertial frame as $p = (E, \, \bp + p_K)$, where $E$ is the particle's energy, $\bp$ and $p_K$ are its 3-momentum and internal momentum vectors, respectively, and we have put $c=1$. So when $p$ is null with respect to $g_P$, we get that
\[
E^2 \ = \ g_P(\bp + p_K, \,\bp + p_K) \ = \ |\bp|^2 \, + \, g_K(p_K,\,p_K) \ .
\]
 The first equality is a photon-like energy-momentum relation in higher dimensions. The second equality becomes the usual 4D energy-momentum relation if the particle's rest mass is identified with the norm of its internal momentum, $m^2 = g_K(p_K,\,p_K) $.

If the higher-dimensional speed is always $c$, as advocated, a particle at rest on 3D space is necessarily moving at full speed $c$ along $K$. Then the associated kinetic energy is a natural source of the particle's energy at spatial rest, $E_0 =m c^2$. It is appealing to think that rest energy is simply the kinetic energy of internal motion, with no need for alternative mechanisms to store energy in a point-like mass. 

The hypothesis of a unique speed in higher dimensions provides a natural origin for 3D rest energy. Conversely, the assumption of a fully kinetic origin of rest energy, if granted, also implies that timelike geodesics in higher-dimensions should not be allowed to represent physical motions. Otherwise, a timelike particle moving with a 3-dimensional speed lower than $c$ could very well have zero velocity along $K$ and hence have no rest energy or rest mass. And such particles have never been observed. Thus, when 3D rest energy is identified with the internal kinetic energy, only allowing null geodesics on $P$ derives from the experimental fact that massless particles always travel at the speed of light on $M_4$. Having a unique finite speed $c$ for all elementary particles is also a mathematically attractive feature, simpler than having a closed interval $[0, c]$ of possible speeds. The inescapable price is having to work with a higher-dimensional spacetime, of course.

As described in section \ref{AUniqueSpeed}, the hypothesis of a unique speed in higher dimensions for all elementary particles would not be tenable in the traditional 5D Kaluza-Klein model. Even forgetting about the strong and weak forces. This is because the null geodesics of a 5D metric with circle isometry do not have enough degrees of freedom to reproduce all the combinations of mass and electromagnetic charge observed in elementary particles. This objection disappears for a higher-dimensional $K$. So a geodesic model entirely based on null paths seems to be more tenable in this case. It also fits better with the commonly stated aim of representing fermions by solutions of a single, massless, Dirac-like equation for higher-dimensional spinors.

\subsubsection*{\bf Spaces of null geodesics}

Let $\NN_h^+$ denote the space of null geodesics starting at a point $h$ in $P$ and moving forward in time. Each geodesic $\gamma (s)$ in this space is characterized by its null tangent vector $\dot{\gamma} (0)$ at $h$. In section \ref{SectionSpaceGeodesics} we describe two distinct parameterizations of $\NN_h^+$, one relying on the particles' momenta and the other on the celestial sphere of velocities.

Different geodesics in $\NN_h^+$ represent the motion of particles with different masses and electromagnetic charges. Fixing the values of those constants carves out a smaller subspace $\NN_h^+(m, q_\kil)$ inside $\NN_h^+$. In section \ref{SectionSpaceGeodesics} it is shown that
\beq \label{SpaceGeodesics1}
\NN_h^+(m, q_\kil) \ \simeq \ \begin{cases}
\emptyset               & \text{if }\   |q_\kil|  \; > \;   m\, c\, |\kil|   \\
\mathbb{R}^{3}     & \text{if }\  |q_\kil| \; =\;   m\, c\, |\kil|  \\   
 \mathbb{R}^{3} \times S^{k-2} & \text{if } \ |q_\kil| \ <\  m\, c\, |\kil|   \ .
\end{cases}
\eeq
Here $|\kil|$ denotes the Riemannian length $\sqrt{(g_K)_h (\kil,\kil)}$ of the Killing vector field at the point $h$. Thus, in this model, particles with a given classical mass cannot have arbitrarily strong charge. This is natural because, according to \eqref{MassParticle}, mass is related to the norm of vertical momentum, while $q_\kil$ measures the component of that same momentum along $\kil$.

\subsubsection*{\bf Some traditional difficulties in Kaluza-Klein}

Kaluza-Klein models are often studied under strong simplifying assumptions, such as minimal 5D dimensions, or constant internal geometry, or the assumption that all relevant gauge fields are associated with internal isometries. Those simplifications facilitate the analysis but also create problems. In fact, we argue that some of the difficulties traditionally attributed to the Kaluza-Klein framework are due to those simplifications.

For example, the physical weak force field is usually associated with an $\mathrm{SU}(2)$-isometry of internal space. But its bosons are massive, albeit light when compared to the Planck mass. So the mass formula \eqref{MassFormula} suggests that the weak field should not be associated with exact isometries of $g_K$. The Lie derivatives $\Lie_{e_a}g_K$ can be small yet non-zero. Abandoning the exact isometry assumption has an extra advantage, as described in \cite{Bap}. It offers a possible way out of the main no-go arguments against having chiral fermions in Kaluza-Klein, such as the arguments based on the Atiyah-Hirzebruch theorem \cite{Witten83}.

Another often-cited difficulty, in the 5D model, is that the Lorentz force equation of motion can be recovered from geodesics only in regions where the internal circle has constant size. And if this condition is granted, the full 5D equations of motion force the norm $|F_A|^2$ of the electromagnetic field strength to vanish in those same regions, which is not realistic \cite{WessonOverduin, Duff}. Moreover, even ignoring that problem, the range of 4D motions projected by 5D geodesics has severe limitations. Due to the normalization condition of the Killing field $\kil$, all 5D geodesics project down to Lorentz force motions on $M_4$ with $q/m$ ratios that are too low when compared to the physical ratios for elementary particles \cite{GK, CE}. So timelike or null 5D geodesics cannot describe the 4D motion of charged elementary particles. One needs spacelike 5D geodesics \cite{DO}.

Our first point is that these difficulties are less acute in higher-dimensional models. For example,  for higher-dimensional $K$, after transforming the Lagrangian in \eqref{HDEinsteinHilbertAction} to the Einstein frame, the local constancy of internal volume only implies that\footnote{This equation can be derived from the last equation in section 3.4 of \cite{Bap} after expressing the normalized metric $\bar{g}_K$ of that section in terms of the plain, un-normalized metric $g_K$.}
\beq
\Big( \frac{\kappa_P}{\kappa_M \, \Vol_{g_\KKK}} \Big)^2  g_K(e_a, e_b) \, (F_A^a)^{\mu \nu} \, (F_A^b)_{\mu \nu} \; + \; R_{g_\KKK} \; - \; \frac{2\, \Lambda\, k}{k+2} \ = \ 0 \ ,
\eeq
where $k$ denotes the dimension of $K$. So there is room for $|F_A|^2$ to vary, as long as those changes are compensated by variations of the internal metric $g_K$ that change the scalar curvature $R_{g_\KKK}$ without affecting the total volume. For example TT-deformations of $g_K$. Moreover, these constraints are derived solely from the Einstein-Hilbert action on $P$, which should not tell the whole story for realistic models operating at different scales \cite{Bap}. 

In section \ref{DiscussionDifficulties}, we discuss the second difficulty, the small $q/m$ ratios implied in 5D geodesics. It is shown that, for higher-dimensional $K$, the normalization of the Killing field no longer determines the value of $g_K(\kil, \kil)$. Only the average value of that norm over $K$. This makes the problem less acute, because that average can be significantly different from the point value $g_K(\kil, \kil)$ that appears in the geodesic equation. 
Moreover, the usual normalization condition of $\kil$ assumes that the background metric on $P$ satisfies the higher-dimensional Einstein equations. For different backgrounds, the normalization condition of $\kil$ can be less problematic.

Another point discussed in section \ref{DiscussionDifficulties} is that, if the background metric satisfies the Einstein equations on $P$, then its projection to $M_4$ should not be identified with the physical 4D metric, in general. Only with a rescaled version of it. This is related to the well-known need to transform the dimensionally-reduced Lagrangian from the Jordan frame to the Einstein frame. So one should be careful when studying geodesics on such backgrounds. Different strategies to deal with this issue are discussed in that final section.

\subsubsection*{\bf Conceptual simplicity}

Kaluza-Klein models strive for conceptual unification at the classical level, before thinking about quantization. Traditionally, they mainly deal with the unification of gauge fields and the 4D metric as components of a unique, higher-dimensional metric. Both in abelian and non-abelian gauge theories. It should also be possible to describe spontaneous symmetry breaking as a dynamical process of the internal metric, in which the isometry group of $g_K$ is broken to generate the gauge bosons' mass according to formula  \eqref{MassFormula} \cite{Bap}.

The main message of the present paper, in turn, is that conceptual unification can be taken farther in simple, higher-dimensional models, hopefully without contradicting observation. At the level of test particles and geodesic motion, one can construct a model where massive and massless particles both travel at the speed of light in higher dimensions, satisfying a photon-like energy-momentum relation that projects down to the usual 4D relation on $M_4$. A model where mass, charges, and 4D momentum describe different aspects of a unique higher-dimensional momentum vector, which is covariantly conserved along geodesics. A model where the energy stored in the 3D rest mass of classical particles is simply the kinetic energy of internal motion. 

In this picture, the classical rest mass is not a constant attribute of a test particle. It is a dynamical quantity measuring the internal component of the particle's momentum. It can vary along geodesics if the background geometry is sufficiently distorted away from the vacuum configuration, since this enables transfers between the horizontal and vertical components of momentum. In particular, the geodesic model illustrates how a higher-dimensional, classical theory can exhibit qualitative features (particle charge and mass change) that are usually reserved for QFT calculations. 

The geodesic model also has clear limitations, of course. There is no quantization of charge, mass or energy. Particles do not back-react on fields and their fermionic nature is ignored. As in the case of GR, geodesics offer a very simplified picture of how a particle interacts with fields. But that classical preview could be useful, nonetheless. A natural heuristic picture may help to guide deeper studies of higher-dimensional models.

\section{Motion in the vacuum and 4D inertial frames}
\label{VacuumMotion}

To establish notation, we start by considering motion on $P = M_4 \times K$ equipped with a product metric $g_M + g_K$. Consider an inertial frame on $M_4$ with coordinates $( t, x^1, x^2, x^3)$ and write the Minkowski metric as
\beq \label{MinkowskiMetric}
g_M \ = \ \dd x^1 \otimes \dd x^1  \, +\,  \dd x^2 \otimes \dd x^2 \, + \,  \dd x^3 \otimes \dd x^3 \, - \, c^2\,  \dd t \otimes \dd t \ .   
\eeq
Take arbitrary local coordinates $y^j$ on the internal space $K$. In an inertial frame related by a boost with speed $u$ along the $x^1$-axis, the new coordinates satisfy the usual relations
\begin{align}
\dd x'^1 \ &= \ \frac{\dd x^1  - u \, \dd t}{\sqrt{1 - u^2 /c^2}}   &  \dd t' \ &= \ \frac{\dd t  - u \,\dd x^1 / c^2 }{\sqrt{1 - u^2 /c^2}}   \nonumber \linebr
\dd x'^n \ &= \ \dd x^n   \quad {\rm{for}}\ \  n=2,3    &  \dd y'^j \ &= \ \dd y^j    \ . 
\end{align}
All coordinates transversal to the boost remain invariant, including the internal ones. A particle moving on $M_4 \times K$ can be parameterized by a curve $\gamma(s) = (\gamma_M(s) ,\, \gamma_K(s) )$. Taking the derivative with respect to $s$, we have the tangent vectors
\[
\frac{\dd \gamma}{\dd s}(s) \ = \ \dot{\gamma} \ = \  \dot{\gamma}_M \ + \ \dot{\gamma}_K \ ,
\]
with $\dot{\gamma}_M$ tangent to $M_4$ and $\dot{\gamma}_K$ tangent to $K$. They satisfy the relation
\begin{align}
g_K(\dot{\gamma}_K, \dot{\gamma}_K) \; -\; g_P( \dot{\gamma}, \dot{\gamma}) \; &= \;  -\, g_M(\dot{\gamma}_M, \dot{\gamma}_M)  \ = \ \Big( c \, \frac{\dd \tau}{ \dd s} \Big)^2 \ ,
\end{align}
where $\tau$ denotes the particle's 4D proper time. Since $g_K$ is Riemannian by assumption, the term $g_K(\dot{\gamma}_K, \dot{\gamma}_K)$ is always non-negative. Thus, if $\gamma$ is a timelike curve on $P$, the projection $\gamma_M$ will also be timelike on $M_4$.  If $\gamma$ is null on $P$, the projection $\gamma_M$ will generically be timelike on $M_4$, but can also be null when $\dot{\gamma}_K$ is zero. In principle, even some spacelike curves on $P$ can project down to timelike curves on $M_4$, if the norm $g_K (\dot{\gamma}_K, \, \dot{\gamma}_K)$ is sufficiently large. We will not consider that possibility here. In a Kaluza-Klein model the additional dimensions are interpreted to be physical as well, so they are subject to the same causal restrictions as the Minkowski dimensions. So we will only consider curves such that $g_P( \dot{\gamma}, \dot{\gamma})$ is negative or zero.

A particle's velocity in the inertial frame $( t, x^1, x^2, x^3, y^j)$ is the derivative of its position with respect to the time coordinate. So the internal velocity of the test particle represented by $\gamma (s)$ is given by 
\[
v_K \ = \ \frac{\dd \gamma_K}{\dd t} \ = \  \frac{\dd s}{\dd t} \; \frac{\dd \gamma_K}{\dd s} \ = \  \frac{1}{\dot{\gamma}^0} \; \dot{\gamma}_K \ , 
\]
which is a vector tangent to $K$ at the point $\gamma_K(s)$. Similarly, the particle's 3D velocity in the frame is
\[
\bv \ = \ \frac{\dd \gamma^n}{\dd t} \frac{\partial}{\partial x^n} \ = \  \frac{1}{\dot{\gamma}^0} \; \dot{\gamma}^n \; \frac{\partial}{\partial x^n} \ ,
\]
with an implicit sum over $n=1,2,3$. The dot denotes derivation with respect to $s$.

An affine geodesic on $P$ is a curve $\gamma (s)$ satisfying $\nabla_{\dot{\gamma}} \dot{\gamma} = 0$, where $\nabla$ denotes the Levi-Civita connection of $g_P$. When $g_P$ is a product metric,  the projected curves $\gamma_M (s)$ and $\gamma_K (s)$ are also geodesics on the respective spaces. For more general $g_P$ they are not. Although the geodesic equation is defined here using the Levi-Civita connection of $g_P$, any connection with totally anti-symmetric torsion would lead to the same equation, hence to the same geodesics on $P$.

\section{Curves in Riemannian submersions}
\label{NullCurves}

Take a Lorentzian metric $g_P$ on the higher-dimensional space $P =  M_4 \times K$ such that the projection $\pi: P \rightarrow M_4$ is a Riemannian submersion. As in the usual Kaluza-Klein framework, this metric determines three more familiar objects: 
\begin{itemize}
\item[{\bf i)}]  through projection, a unique Lorentzian metric $g_M$ on $M_4$; 
\item[{\bf ii)}] through restriction to the fibres $\{ x\} \times K$, a family of Riemannian metrics $g_K(x)$ on the internal spaces parameterized by the points in $M_4$;
\item[{\bf iii)}] gauge fields on spacetime, encapsulated in a one-form $A$ on $M_4$ with values in the Lie algebra of vector fields on $K$.
\end{itemize}
The equations linking these objects to the higher-dimensional metric $g_P$ are
\bal \label{MetricDecomposition}
g_P (U, V) \ &= \ g_K (U, V) \nonumber \\
g_P (X, V) \ &= \  - \ g_K \left(A (X), V \right) \nonumber \\
g_P (X, Y) \ &= \ g_M (X, Y) \ + \  g_K \left(A(X) , A(Y) \right) \ ,
\end{align}
for all tangent vectors $X,Y \in TM$ and vertical vectors $U, V \in TK$. These relations generalize the usual Kaluza ansatz for $g_P$. They show how to reconstruct the higher-dimensional metric from the data $(g_M , A, g_K)$. The correspondence between submersive metrics on $P$ and that data is a bijection. This is described in more detail in \cite{Bap}.

Choosing a set $\{ e_a \}$ of independent vector fields on $K$, the one-form on spacetime can be decomposed as a sum 
\beq \label{GaugeFieldExpansion}
A(X) \ = \ \sum\nolimits_a \,A^a(X) \, e_a \ ,
\eeq
where the real-valued coefficients $A^a(X)$ are the traditional gauge fields on $M_4$. For general submersive metrics on $P$ this can be an infinite sum, with $\{ e_a \}$ being a basis for the full space of vector fields on $K$, which coincides with the Lie algebra of the diffeomorphism group ${\rm Diff} (K)$. The curvature $F_A$ is a two-form on $M_4$ with values in the Lie algebra of vector fields on $K$. It can be defined by 
\[
F_{A} (X, Y)  \ := \ (\dd_M A^a) (X, Y) \, e_a  \ + \ A^a (X)\, A^b (Y) \, [e_a, e_b]  \ ,
\]
where the last term is just the Lie bracket $ [A(X), A(Y)] $ of vector fields on $K$.

The tangent space to $P=M_4\times K$ has a natural decomposition $TP = TM \oplus TK$. Since $TK$ is the kernel of the projection $TP \rightarrow TM$, it is also called the vertical sub-bundle $\VV$ of $TP$. The higher-dimensional metric $g_P$ determines an orthogonal complement $\HH \simeq (TK)^\perp$, called the horizontal sub-bundle. So from $g_P$ we get a second decomposition
\beq \label{HorizontalDistribution}
T P \; =\; \HH \oplus \VV   \ .
\eeq
Every tangent vector $E\in TP$ can be written as a sum of components $E^\HH  +  E^\VV$. The relation between the two decompositions of $TP$ is quite simple in a Riemannian submersion.  Writing $E = E_M + E_K$ for the components according to $TP = TM \oplus TK$, we have
\beq \label{DefinitionHorizontalDistribution}
E^\VV  \ = \   E_K \; - \;  A (E_M)    \qquad \qquad  E^\HH  \ = \  E_M \; + \;   A (E_M)   \ .
\eeq
So the information contained in the gauge one-form $A$ on $M$ is equivalent to the information contained in the horizontal distribution $\HH \subset TP$. Geometrically, it is well known that the curvature $F_A$ is the obstruction to the integrability of the distribution $\HH$, in the sense that it vanishes if and only if $P$ can be foliated by horizontal submanifolds whose tangent space coincides with $\HH$ \cite{Besse}.

Now let $\gamma (s)$ be a curve on $P$ parameterized by $s$. Let $\gamma_M (s)$ and $\gamma_K (s)$ denote its projections onto the factors $M_4$ and $K$. The tangent vectors to $P$ obtained by derivation with respect to $s$ have two decompositions 
\beq \label{TwoDecompositions}
\frac{\dd \gamma}{\dd s} \ = \ \dot{\gamma} \ = \  \dot{\gamma}_M \ +\  \dot{\gamma}_K  \ = \  \dot{\gamma}^\VV \; +\ \dot{\gamma}^\HH \ . 
\eeq
According to \eqref{DefinitionHorizontalDistribution} these are related by
\bal \label{DecompositionsTangentCurve}
\dot{\gamma}^\VV \ &= \   \dot{\gamma}_K   \,  - \,  A^a (\dot{\gamma}_M) \, e_a      \linebr
\dot{\gamma}^\HH \ &= \   \dot{\gamma}_M  \, +\,  A^a (\dot{\gamma}_M) \, e_a     \nonumber \ .
\end{align} 
Since $g_P$ restricted to horizontal vectors projects down to $g_M$ and the second decomposition is $g_P$-orthogonal, we have that 
\beq \label{NormsCurve}
- \,  g_M ( \dot{\gamma}_M, \, \dot{\gamma}_M)\ = \  - \, g_P (\dot{\gamma}^\HH, \, \dot{\gamma}^\HH) \, = \,  g_P (\dot{\gamma}^\VV, \, \dot{\gamma}^\VV) \; - \; g_P (\dot{\gamma}, \, \dot{\gamma}) \ .
\eeq
The restriction of $g_P$ to vertical vectors is the Riemannian metric $g_K$ on the fibre. Hence, the first term on the right-hand side is always non-negative. So is the second term when $\gamma(s)$ is a null or timelike curve on $P$. In that case, the projection $\gamma_M (s)$ is also a null or timelike curve on $M_4$. 
The particle's proper time on Minkowski space along the path $\gamma_M (s)$ is measured by
\beq \label{DifferenceProperTime}
c\, [ \tau(s_1)\, -\, \tau(s_2) ] \ = \ \int_{s_1}^{s_2} \sqrt{-  g_M\left(\dot{\gamma}_M , \dot{\gamma}_M  \right) } \ \dd s \ = \  \int_{s_1}^{s_2} \sqrt{g_K (\dot{\gamma}^\VV,  \dot{\gamma}^\VV) \, - \, g_P (\dot{\gamma},  \dot{\gamma}) }  \; \dd s \ .
\eeq
The last integral depends on $g_K$ and on the gauge fields, as is clear from \eqref{DecompositionsTangentCurve}.

\section{Geodesics in Riemannian submersions}
\label{GeneralGeodesicsSubmersions}

\subsection*{Equations of motion}

This section describes the main mathematical results in the paper. Some readers may wish to glance through the details in a first reading and only extract the main outputs, such as formulae \eqref{SecondDerivativeProperTime}, \eqref{IdentitySecondFundForm}, \eqref{GeneralizedLorentzForce2} and propositions \ref{HorizontalGeodesics} and \ref{EquivalenceKillingCondition}.

Let $\gamma(s)$ be a general curve on $P$ and let $\nabla$ denote the Levi-Civita connection on $TP$ associated with the metric $g_P$. The covariant derivative $\nabla_{\dot{\gamma}} \, \dot{\gamma}$ determines the parallel transport of the tangent vector $\dot{\gamma}$ along the path $\gamma(s)$.
When $P = M_4 \times K$ and the metric $g_P$ defines a Riemannian submersion, one can ask how $\nabla_{\dot{\gamma}} \, \dot{\gamma}$ decomposes into horizontal and vertical parts.  Adapting the notation and using the properties of the tensors involved, as in \cite{Bap}, classic results of O'Neill \cite[corollary 1]{ONeill} imply that
\bal \label{DecompositionGeodesicEquation}
g_P \big(\nabla_{\dot{\gamma}} \dot{\gamma} , \, V \big) \ &= \   g_P \big( \nabla_{\dot{\gamma}} \, \dot{\gamma}^\VV , \, V \big)    \, - \,   g_P \big( S_V \, \dot{\gamma}^\VV, \, \dot{\gamma}^\HH \big)       \linebr
g_P \big(\nabla_{\dot{\gamma}} \dot{\gamma} , \, Z \big) \ &= \   g_M \big( \nabla^M_{\dot{\gamma}_M} \, \dot{\gamma}_M ,\, \pi_\ast Z \big)   \, +\,  F_A^a(\pi_\ast Z,\, \dot{\gamma}_M )\ g_P \big(  e_a ,  \,  \dot{\gamma} \big) \, +\, g_P \big( S_{\dot{\gamma}^\VV} \, \dot{\gamma}^\VV, \, Z \big)     \nonumber \ 
\end{align}
for any curve $\gamma (s)$ on $P$, any vertical vector $V$ and any horizontal vector $Z$ in $TP$. The first equation determines the vertical component of $\nabla_{\dot{\gamma}} \dot{\gamma}$; the second its horizontal part. The notation follows section \ref{NullCurves} and \cite[sec. 2]{Bap}. It differs from the conventional notation in the literature about Riemannian submersions since the latter clashes with established physics notation (see the remark in appendix \ref{Auxiliary results}). So we use the decomposition $TP = \HH \oplus \VV$;  the curvature $F_A$ is a two-form on $M_4$ with values on the vector fields on $K$; the tensor $S: \VV \times \VV \to \HH$ is the second fundamental form of the fibres of the projection $\pi: P \rightarrow M_4$, which can be idientified with the covariant derivative $\dd^A g_K$ of \eqref{DefinitionCovariantDerivative} through formula \eqref{IdentitySecondFundForm}; the symbol $\nabla^M$ denotes the Levi-Civita connection on $(M_4, g_M)$. The covariant derivative $\nabla^M_{\dot{\gamma}_M} \, \dot{\gamma}_M $ is a vector field along the curve $\gamma_M(s)$ on $M_4$. It does not vanish in general, since the projected curve $\gamma_M = \pi \circ \gamma$ need not be a geodesic on the base $(M_4, g_M)$.

Now let $\gamma(s)$ be a geodesic curve on $P$ satisfying $\nabla_{\dot{\gamma}} \dot{\gamma} = 0$. The first equation in \eqref{DecompositionGeodesicEquation} implies that the norm of the vertical component $\dot{\gamma}^\VV$ evolves according to
\beq   \label{NormVerticalComponent}
\frac{\dd}{\dd s} \ g_P \big(\dot{\gamma}^\VV, \, \dot{\gamma}^\VV   \big) \ = \ 2\ g_P \big(  \nabla_{\dot{\gamma}} \, \dot{\gamma}^\VV, \, \dot{\gamma}^\VV   \big) \ = \ 2\ g_P \big( S_{ \dot{\gamma}^\VV } \, \dot{\gamma}^\VV , \,  \dot{\gamma}^\HH \big) \ .
\eeq
Using \eqref{NormsCurve} and the fact that $g_P (\dot{\gamma}, \, \dot{\gamma})$ is constant along a geodesic, this implies that 
\beq \label{SecondDerivativeProperTime}
\frac{\dd}{ \dd s}\  \Big( c\, \frac{\dd \tau}{ \dd s} \Big)^2  \ = \ - \, \frac{\dd}{\dd s} \ g_M \big(\dot{\gamma}_M, \, \dot{\gamma}_M  \big) \ = \ 2\ g_P \big( S_{ \dot{\gamma}^\VV } \, \dot{\gamma}^\VV , \,  \dot{\gamma}^\HH \big) \ .
\eeq
So the norm on $M_4$ of the tangent $\dot{\gamma}_M$ may not be constant as the parameter $s$ varies. Note that the norm of the full tangent vector $\dot{\gamma}$ is always preserved along the geodesic. It is only the norm of the components $\dot{\gamma}^\VV$ and $\dot{\gamma}^\HH$ that may change, with opposite signs, in regions where the tensor $S$ is non-zero, i.e. in regions where the fibres of $P$ are not totally geodesic. 

Now consider the second equation in \eqref{DecompositionGeodesicEquation}. If $\gamma(s)$ is a geodesic on $P$, it implies that the projected curve $\gamma_M = \pi \circ \gamma$ on Minkowski space satisfies
\beq  \label{GeneralizedLorentzForce1}
g_M \big(\nabla^M_{\dot{\gamma}_M}\, \dot{\gamma}_M , \, \pi_\ast Z \big) \ = \ - \ g_P (e_a, \, \dot{\gamma}) \, F_{A}^a \big(\pi_\ast Z, \, \dot{\gamma}_M \big) \ - \ g_P \big(S_{\dot{\gamma}^\VV} \, \dot{\gamma}^\VV, \, Z \big)
\eeq
for all horizontal vectors $Z \in TP$. The restriction of $g_P$ to the fibres, denoted $g_K$, is a family of Riemannian metrics on $K$ parameterized by the points in $M_4$. The second fundamental form $S$ is equivalent to the covariant derivative of $g_K$ along vectors on $M_4$, as defined in \eqref{DefinitionCovariantDerivative}. This follows from the identity 
\beq \label{IdentitySecondFundForm}
(\dd^A g_K)_{\pi_\ast Z} (U, V) \ = \ - \, 2\; g_P (S_U V,  Z) \ ,
\eeq
justified in \cite[sec. 2]{Bap}. Therefore, denoting by $X$ the projection $\pi_\ast Z$ in $TM$ and using that $e_a$ is a vertical vector, the equation above can also be written as
\beq  \label{GeneralizedLorentzForce2}
g_M \big(\nabla^M_{\dot{\gamma}_M}\, \dot{\gamma}_M , \, X \big) \ = \ g_K (e_a, \, \dot{\gamma}^\VV) \, F_{A}^a ( \dot{\gamma}_M , X ) \ + \ \frac{1}{2} \; (\dd^A g_K)
_X (\dot{\gamma}^\VV, \dot{\gamma}^\VV)
\eeq
for any $X$ in $TM$. This equation generalizes the Lorentz force equation of motion to regions of spacetime where $S \neq 0$, i.e. to regions where the internal metric $g_K$ is not covariantly constant as one moves along $M_4$. It generalizes the equation derived by Kerner when $K$ is a Lie group and $g_K$ is a bi-invariant metric \cite{Kerner}, and also the equations for geodesic motion when $K$ is a circle or a group manifold with variable size \cite{Kovacs, GK, Feher, CP, LM}. The motion $\gamma_M (s)$ on $M_4$ is coupled to the internal motion $\gamma_K (s)$ through the combination $\dot{\gamma}^\VV =  \dot{\gamma}_K   \,  - \,  A^a (\dot{\gamma}_M) \, e_a$, described in \eqref{DecompositionsTangentCurve}.

\subsection*{Horizontal geodesics}

Equations \eqref{DecompositionGeodesicEquation} imply that a horizontal curve $\gamma^{\rm hor}(s)$ on $P$, i.e. a curve whose tangent has a vanishing component $(\dot{\gamma}^{\rm hor})^\VV$,  satisfies the relation
\[
g_P \big(\nabla_{\dot{\gamma}^{\rm hor}} \, \dot{\gamma}^{\rm hor} , \, E \big) \ = \   g_M \big(\,  \nabla^M_{\dot{\gamma}_M^{\rm hor}} \, \dot{\gamma}_M^{\rm hor} ,\; \pi_\ast E \, \big)     
\]
for every tangent vector $E$ in $TP$. So the curve $\gamma^{\rm hor}$ is a geodesic on $P$ if and only if its projection $\gamma_M^{\rm hor}$ is a geodesic on $M_4$. Thus, a  direct consequence of \eqref{DecompositionGeodesicEquation} is that
\begin{proposition}[\! \cite{ONeill}]  \label{HorizontalGeodesics}
The horizontal geodesics on $(P, g_P)$ are exactly the horizontal lifts of geodesics on the base $(M_4, g_M)$.
\end{proposition}
In particular, if the gauge fields $A^a$ and the internal metrics $g_K$ change but the metric $g_M$ on the base does not, then the horizontal geodesics will change as curves on $P$, but their projection to $M_4$ will remain identical.

Now suppose that $\gamma(s)$ is a causal (timelike or null) curve on $P$ that projects down to a null curve on $M_4$. This means that the particle represented by $\gamma$ is moving at the speed of light on Minkowski space. Using that $g_K (\dot{\gamma}^\VV, \, \dot{\gamma}^\VV)$ and $- g_P (\dot{\gamma}, \, \dot{\gamma})$ are both non-negative, it follows from \eqref{NormsCurve} that a zero $g_M (\dot{\gamma}_M, \, \dot{\gamma}_M)$ implies that those two terms must be zero as well.  In other words, $\gamma$ projects down to a null curve on $M_4$ if and only if $\gamma$ itself is horizontal and null on $P$. Thus, we conclude that particles moving at the speed of light on $M_4$ are always represented by horizontal, null geodesics on $P$. Combining this with the previous observations about horizontal geodesics, we get that:
\begin{consequence}
In a causal, higher-dimensional geodesic model, particles moving at the speed of light on $M_4$ are always represented by horizontal, null geodesics on $P$. Their 4D motion is not affected by the configuration of gauge fields and internal metrics on $P$. They follow the null geodesics on $M_4$ determined by $g_M$ alone.
 \end{consequence}
This is not true for particles travelling at lower speeds on $M_4$. In that case, the corresponding geodesics on $P$ can have a non-zero vertical component $\dot{\gamma}^\VV$ that, according to \eqref{GeneralizedLorentzForce2}, couples $\gamma_M(s)$ to the gauge fields and internal geometry through the tensors $F_A$ and $\dd^A g_K$.
As mentioned in the Introduction, this property of the geodesic model is incompatible with the motion of the old massless neutrinos, since they were thought to interact with the weak field and travel at the speed of light on Minkowski space. So a causal, geodesic model on $M_4 \times K$ disfavours the existence of massless neutrinos or similar particles. This does not seem to have been remarked before.

\subsection*{Vertical Killing fields}

In the next few paragraphs, we will study the conditions necessary for a Killing field of $g_K$ to be a Killing field of $g_P \simeq (g_M,  A, g_K)$ as well. When this happens, we get additional constants of the higher-dimensional, geodesic motion. 

Let $\gamma (s)$ be a geodesic for $g_P$ and let $V$ be a vertical vector field on $P$. In general, the inner-product $g_P (V,  \dot{\gamma})$ is not constant along the geodesic. Using lemma \ref{Lemma1Appendix} in the appendix, its dependence on the parameter $s$ is calculated to be
\begin{align}  \label{DerivativeChargeGeodesic}
2\; \frac{\dd}{\dd s} \ g_P (  V,  \dot{\gamma} ) \, &= \, 2\, g_P \big( \nabla_{\dot{\gamma}} \,  V, \, \dot{\gamma} \big) \ = \  (\Lie_V g_P)(\dot{\gamma}, \dot{\gamma})   \linebr
&= \, (\Lie_V g_P)\big(\dot{\gamma}^\HH, \dot{\gamma}^\HH \big)  \, + \,  2 \, \big(\Lie_V g_P)(\dot{\gamma}^\HH, \dot{\gamma}^\VV \big)  \,+ \, (\Lie_V g_P)\big(\dot{\gamma}^\VV, \dot{\gamma}^\VV \big)   \nonumber  \linebr
&= \ (\Lie_V g_K) \big(\dot{\gamma}^\VV, \dot{\gamma}^\VV \big) \, + \,  2\, g_K \big([\dot{\gamma}^\HH , V], \dot{\gamma}^\VV \big) \nonumber  \linebr
&= \ (\Lie_V g_K) \big(\dot{\gamma}^\VV, \dot{\gamma}^\VV\big) \, + \,  2\,  g_K \big(  \dd^A_{\dot{\gamma}_M}\, V   , \  \dot{\gamma}^\VV \big)   \ . \nonumber 
\end{align}
In the last equality, we used the $\Diff (K)$-covariant derivative of a vertical field $V$ on $P$ along a vector field $X$ on $M$. Using the Lie bracket of vector fields on $P$, it is defined through the expression
\beq \label{CovariantDerivativeVerticalField}
  \dd^A_X\, V  \ : = \  [X^\HH, V] \ = \ [X, V] \ + \ [A(X), V] \ = \ (\dd V^j) (X) \, \frac{\partial}{\partial  y^j}  \, + \, A^a(X)\; [e_a, V]    \ ,
\eeq
where the $y^j$ are any coordinates on $K$. So $\dd^A_X\, V$ is another vertical field on $P$.  These equalities use the fact that $V$ is vertical; that the functions $A^a(X)$ do not depend on the coordinates $y^j$; and that the Lie bracket $[ X, \, \partial / \partial y^j ]$ vanishes, since these are vector fields on different manifolds. The covariant derivative $\dd^A$ is an interesting object of study. As in \cite[sec. 2.5]{Bap}, one can check that it is $C^\infty(M)$-linear in both entries and is equivariant with respect to $\Diff (K)$-gauge transformations of $V$ and the gauge one-form $A$. 

Now, at a fixed point on $P$ there are geodesics passing through with arbitrary vectors $\dot{\gamma}^\VV$ and $\dot{\gamma}_M$. So formula \eqref{DerivativeChargeGeodesic} implies the equivalence of conditions 2 and 3 below.
\begin{proposition}  \label{EquivalenceKillingCondition}
 Let $V$ be a vertical vector field on $P$ and let $g_P \simeq (g_M, A, g_K)$ be a submersive metric. Then the following conditions are equivalent:
\begin{enumerate}
\item $V$ is a Killing vector field of the higher-dimensional metric $g_P$;
\item The restriction of $V$ to each fibre is Killing and $\dd^A_X\, V$ vanishes for all $X \in TM$;
\item The inner-products $g_P (  V,  \dot{\gamma})$ are constant for all geodesics $\gamma(s)$ on $P$.
\end{enumerate}
\end{proposition}
The equivalence of conditions 1 and 2 follows from lemma \ref{Lemma1Appendix} in the appendix.

\section{Constants of motion among massless gauge fields}
\label{MasslessGaugeFields}

\subsection*{Simplified geodesic equation}

The aim of this section is to identify constants of geodesic motion in certain regions of spacetime, namely, regions where the Higgs-like scalars are constant and where all massive gauge fields vanish. In the physical world, this would allow an electromagnetic field but would exclude a non-zero weak field, for example. Under these conditions, the classical mass and charge of physical particles are constant. So the definitions of mass and charge adopted in the geodesic model should be searched among quantities that are constants of motion in these regions.

Consider a higher-dimensional submersive metric characterized by the equivalent data $ g_P \simeq (g_M , A, g_K)$, as before. In this section, we will assume :
\begin{itemize}
\item[{\bf H1)}]  The internal metrics $g_K$ are the same for all fibres;
\item[{\bf H2)}] The one-form $A(X)$ has values in the space of Killing vector fields on $(K, g_K)$.
\end{itemize}
According to \cite{Bap} and the mass formula \eqref{MassFormula}, these assumptions correspond to regions of $M_4$ where the Higgs-like scalars are constant and only massless gauge fields are present. In particular, the second fundamental form of the fibres (denoted $S$ in section \ref{GeneralGeodesicsSubmersions}), which is equivalent to the covariant derivative $\dd^A g_K$ of \eqref{DefinitionCovariantDerivative}, vanishes. With these assumptions, the equations of section \ref{GeneralGeodesicsSubmersions} for higher-dimensional geodesics simplify considerably. 

For example, since the norm $g_P (\dot{\gamma}, \, \dot{\gamma}  )$ is always a constant of geodesic motion, equations  \eqref{NormsCurve} and \eqref{NormVerticalComponent} imply that, for geodesics, 
\beq   \label{PropertimeChange}
\frac{\dd}{ \dd s}\ \Big( c\, \frac{\dd \tau}{ \dd s} \Big)^2  \ = \ - \,\frac{\dd}{\dd s} \  g_M (\dot{\gamma}_M, \, \dot{\gamma}_M  ) \\ = \ \frac{\dd}{\dd s} \ g_P \big(\dot{\gamma}^\VV, \, \dot{\gamma}^\VV   \big) \ = 0 \ .
\eeq
So, in these regions, the rate of change of proper time, $\frac{\dd \tau}{ \dd s}$, is a constant of motion. In section \ref{ParticlesPhysics} we will relate it to the mass of the test particle. Moreover, since $\dd^A g_K$ vanishes and $\gamma$ is a geodesic by assumption, the two equations in \eqref{DecompositionGeodesicEquation} are simplified to
 \bal \label{DecompositionGeodesicEquation2}
g_P( \nabla_{\dot{\gamma}} \, \dot{\gamma}^\VV , \, V) \ & = \  0 \    \linebr
g_M \big( \nabla^M_{\dot{\gamma}_M} \, \dot{\gamma}_M ,\, X \big)   \ & = \    g_P (e_a, \, \dot{\gamma}) \;  F_{A}^a (\dot{\gamma}_M, X )      \nonumber \ .
\end{align}
Here $V$ is any vertical vector in $TP$ and $X$ is any vector in $TM$. The first equation describes the vertical component $\dot{\gamma}^\VV$ of the tangent $\dot{\gamma}$. It says that, although the vectors $\dot{\gamma}^\VV$ are not parallelly transported along the geodesic, as the $\dot{\gamma}$ are, at least the vertical part of the covariant derivative $\nabla_{\dot{\gamma}} \, \dot{\gamma}^\VV$ vanishes. The second equation says that the projection of the geodesic to $M_4$ is a curve $\gamma_M (s)$ satisfying something similar to a Lorentz force law \cite[sec. 4.3]{Wald}, but with more gauge fields involved. It reduces to the equation derived by Kerner when $K$ is a Lie group and $g_K$ is a bi-invariant metric \cite{Kerner}. The inner-products $g_P (e_a, \, \dot{\gamma})$ play the role of ``charges'', coupling the 4D motion $\gamma_M (s)$ to the curvature $F^a_A$ of the background gauge fields. The next few paragraphs will investigate the extent to which these inner-products are constant along the geodesics, so that \eqref{DecompositionGeodesicEquation2} truly resembles a Lorentz force equation of motion.

\subsection*{Constants of geodesic motion}

Consider a general geodesic $\gamma(s)$ on the higher-dimensional $P$. As usual, if $Z$ is a Killing vector field with respect to $g_P$, we have that
\[
\frac{\dd}{ \dd s} \ g_P(\dot{\gamma}, \, Z) \ |_{\gamma(s)} \ = \ g_P( \nabla_{\dot{\gamma}} \dot{\gamma}, \, Z) \ + \ g_P(\dot{\gamma}, \, \nabla_{\dot{\gamma}}  Z) \ = \ 0 .
\]
So the inner-product $g_P(\dot{\gamma}, \, Z)$ is constant along the geodesic $\gamma(s)$. 

For a simple product metric, $g_P = g_M + g_K$, the Killing fields of $g_P$ are sums of Killing fields of $g_M$ and $g_K$. So all the Killing fields of $g_K$ determine constants of geodesic motion, besides those determined by the isometries of $M_4$. However, these additional constants cannot be perceived from the 4D projection of motion. A product metric has vanishing gauge fields and constant internal geometry, so the second relation in \eqref{DecompositionGeodesicEquation2} says that $\gamma_M$ is a pure geodesic of $g_M$, uncoupled to the internal metric and to internal motion.

When the gauge fields are non-zero, some of the Killing fields of $g_K$ may no longer preserve the higher-dimensional metric $g_P  \simeq (g_M,  A, g_K)$ on $M_4 \times K$. So the description of the constants of motion becomes less straightforward. It is still simple enough, however, if we assume that $g_P$ satisfies conditions H1 and H2. 
Let us start by describing how the general formulae of section \ref{GeneralGeodesicsSubmersions} are simplified under H1.
\begin{lemma}    \label{ChargeRotation}
Let $g_P \simeq (g_M,  A, g_K)$ be a submersive metric on $P$ satisfying assumption H1. Let $\gamma(s)$ be a geodesic of $g_P$. If $V$ is a Killing field of $g_K$, we have that 
\beq \label{SimplifiedEvolution}
\frac{\dd}{\dd s} \ g_P (  V,  \, \dot{\gamma} ) \ = \  A^a(\dot{\gamma}_M) \  g_P ( [ e_a , \, V ], \, \dot{\gamma} )   \  .
\eeq
 \end{lemma}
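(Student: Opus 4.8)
My approach is to compute $\frac{\dd}{\dd s} g_P(V, \dot\gamma)$ directly using the geodesic equation $\nabla_{\dot\gamma}\dot\gamma = 0$ and the product structure $P = M_4 \times K$, and to see precisely how the failure of $V$ to be Killing for $g_P$ produces the right-hand side. The key observation is that under H1 the internal metric $g_K$ is the same on every fibre, so a Killing field $V$ of $g_K$ lifts to a genuine vector field on $P$ (via the trivial product structure) whose flow preserves the fibre-metric part of $g_P$ but not necessarily the horizontal distribution $\HH$, since the gauge one-form $A$ need not be invariant. Thus $V$ fails to be $g_P$-Killing only through terms involving $A$, and these terms are exactly what should assemble into $A^a(\dot\gamma_M)\, g_P([e_a, V], \dot\gamma)$.

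First I would write, using $\nabla_{\dot\gamma}\dot\gamma = 0$ and compatibility of $\nabla$ with $g_P$,
\[
\frac{\dd}{\dd s}\, g_P(V,\dot\gamma) \ = \ g_P(\nabla_{\dot\gamma} V,\, \dot\gamma)\ .
\]
Then, since $\nabla$ is torsion-free, $\nabla_{\dot\gamma} V = \nabla_V \dot\gamma + [\dot\gamma, V]$; however, a cleaner route is to use the standard identity for the Levi-Civita connection expressing $2\, g_P(\nabla_{\dot\gamma} V,\, \dot\gamma)$ in terms of $(\Lie_V g_P)(\dot\gamma,\dot\gamma)$ plus a term that vanishes on a geodesic. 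Concretely, for any vector field $V$,
\[
\frac{\dd}{\dd s}\, g_P(V,\dot\gamma)\ =\ \tfrac12\,(\Lie_V g_P)(\dot\gamma,\dot\gamma)\ +\ g_P(\nabla_{\dot\gamma}\dot\gamma,\,V)\ =\ \tfrac12\,(\Lie_V g_P)(\dot\gamma,\dot\gamma)\ ,
\]
the last equality by the geodesic equation. So the whole problem reduces to computing $(\Lie_V g_P)(\dot\gamma,\dot\gamma)$ when $V$ is a $g_K$-Killing field and $g_P$ is reconstructed from $(g_M, A, g_K)$ via \eqref{MetricDecomposition}.

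Next I would evaluate $(\Lie_V g_P)(Z,Z)$ for $Z = \dot\gamma$ by differentiating the three relations in \eqref{MetricDecomposition} along the flow of $V$. Since $V$ is vertical and $M_4$-independent, $\Lie_V$ annihilates $g_M$ and any purely horizontal-base object; acting on $g_K(U,V')$ it gives $(\Lie_V g_K)(U,V') = 0$ because $V$ is Killing for $g_K$; the only surviving contributions come from the terms $g_K(A(X), \cdot)$ and $g_K(A(X),A(Y))$, where $\Lie_V$ hits the vertical arguments $A(X) = A^a(X) e_a$, producing $A^a(X)\, (\Lie_V e_a) = -A^a(X)[e_a,V]$ (again using that $\Lie_V g_K = 0$ to move the derivative onto the $e_a$'s only). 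Decomposing $\dot\gamma = \dot\gamma^\HH + \dot\gamma^\VV$ and using \eqref{DecompositionsTangentCurve}, the mixed and horizontal blocks of \eqref{MetricDecomposition} combine so that $\tfrac12(\Lie_V g_P)(\dot\gamma,\dot\gamma)$ collapses to $A^a(\dot\gamma_M)\, g_K([e_a,V],\, \dot\gamma^\VV)$, which equals $A^a(\dot\gamma_M)\, g_P([e_a,V],\,\dot\gamma)$ since $[e_a,V]$ is vertical and $g_P$ restricts to $g_K$ on vertical vectors. That is the claimed formula.

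The main obstacle I anticipate is purely bookkeeping: carefully tracking which arguments of \eqref{MetricDecomposition} are vertical versus horizontal when the Lie derivative acts, and checking that the cross-terms between $\dot\gamma^\HH$ and $\dot\gamma^\VV$ (and the factor-of-two combinatorics from the symmetric bilinear form) organize into a single clean term rather than leaving a residual. A secondary subtlety is justifying that $\Lie_V$ genuinely kills the $g_M$-part and the non-$e_a$ dependence in $g_K(A(X),A(Y))$ — this uses H1 (fibrewise-constant $g_K$, so $g_K$ itself has no $M_4$-derivative to feed into $\Lie_V$) and the fact that $V$ commutes with coordinate vector fields pulled back from $M_4$. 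Once these are in place the identity drops out.
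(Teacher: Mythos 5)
Your proposal is correct and follows essentially the same route as the paper: both reduce $\frac{\dd}{\dd s}\, g_P(V,\dot{\gamma})$ to $\tfrac12 (\Lie_V g_P)(\dot{\gamma},\dot{\gamma})$ via the geodesic equation and metric compatibility, and then identify the sole surviving contribution (after the Killing term $\Lie_V g_K$ and the bracket $[\dot{\gamma}_M, V]$ drop out under H1) as the failure of the flow of $V$ to preserve the gauge one-form, yielding $A^a(\dot{\gamma}_M)\, g_P([e_a,V],\dot{\gamma})$. The only cosmetic difference is that you evaluate the Lie derivative by differentiating the block form \eqref{MetricDecomposition} in the $TM\oplus TK$ splitting, whereas the paper organizes the same computation through the $\HH\oplus\VV$ splitting via formula \eqref{DerivativeChargeGeodesic}, \eqref{CovariantDerivativeVerticalField} and Lemma \ref{Lemma1Appendix}.
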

\begin{proof}
Assumption H1 says that the internal metric is the same for all fibres. So if $V$ is a Killing field of $g_K$, the term with $\Lie_V g_K$ vanishes in \eqref{DerivativeChargeGeodesic}. Moreover, if we regard $V$ as a vector field on $M_4 \times K$ that is constant along the $M_4$ direction, the Lie bracket $[X, V]$ vanishes for arbitrary vector fields $X$ on $M_4$. This simplifies \eqref{CovariantDerivativeVerticalField}. The result follows from the combination of the simplified forms of \eqref{DerivativeChargeGeodesic}  and \eqref{CovariantDerivativeVerticalField}.
\end{proof}
Now, according to assumption H2, the gauge one-form $A(X)$ has values on the space of Killing fields of $g_K$. So when $V$ commutes with all other Killing fields on $K$, formula \eqref{SimplifiedEvolution} implies that $g_P (  V,  \, \dot{\gamma} )$ has vanishing derivative with respect to the geodesic parameter $s$. Using the equivalence relations in proposition \ref{EquivalenceKillingCondition}, we conclude that:
\begin{lemma}  \label{ConstancyCharges}
Let $g_P \simeq (g_M,  A, g_K)$ be a submersive metric on $P$ satisfying assumptions H1 and H2. Let $\kil$ be a Killing vector field on $(K, g_K)$ that commutes with all other Killing fields of $g_K$. Then $\kil$ is also a Killing field on $(P, g_P)$ and the inner-product $g_P (  \kil,  \, \dot{\gamma} ) $ is a constant of motion for every geodesic $\gamma$ on $P$.
 \end{lemma}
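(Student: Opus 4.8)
The plan is to deduce Lemma \ref{ConstancyCharges} directly from the preceding Lemma \ref{ChargeRotation}, the hypothesis H2, and the characterization of electromagnetic-like Killing fields via the (yet unstated but assumed) Proposition \ref{EquivalenceKillingCondition}. First I would recall the setup: $g_P$ satisfies H1 and H2, and $\kil$ is a Killing field of $g_K$ that commutes with every other Killing field of $g_K$. The key observation is that H2 guarantees each $e_a$ appearing in the expansion \eqref{GaugeFieldExpansion} of $A$ is itself a Killing field of $g_K$. Hence $[\kil, e_a]$ is a bracket of $\kil$ with a Killing field, and by the commutation hypothesis on $\kil$ this bracket vanishes as a vector field on $K$, i.e. $[e_a, \kil] = 0$ for every $a$.

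Next I would apply Lemma \ref{ChargeRotation}. Since $g_P$ satisfies H1 and $\kil$ is a Killing field of $g_K$, formula \eqref{SimplifiedEvolution} applies with $V = \kil$, giving
\beq
\frac{\dd}{\dd s} \ g_P(\kil, \, \dot{\gamma}) \ = \ A^a(\dot{\gamma}_M) \ g_P([e_a, \, \kil], \, \dot{\gamma}) \ .
\eeq
By the vanishing of every $[e_a, \kil]$ established in the first step, the right-hand side is identically zero, so $g_P(\kil, \dot{\gamma})$ is constant along any geodesic $\gamma$. It remains to argue that $\kil$ is in fact a Killing field of the full higher-dimensional metric $g_P$, not merely of $g_K$. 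This is where I would invoke Proposition \ref{EquivalenceKillingCondition}: under H1 and H2, the condition that a vertical Killing field of $g_K$ commutes with all other Killing fields of $g_K$ is equivalent to (or at least implies) that the corresponding vertical vector field on $P$, constant along the $M_4$ factor, preserves $g_P$. Concretely one checks $\Lie_\kil g_P = 0$ by testing against the three types of pairs $(U,V)$, $(X,V)$, $(X,Y)$ in the decomposition \eqref{MetricDecomposition}: the $(U,V)$ piece vanishes because $\kil$ is $g_K$-Killing and $g_K$ is fibre-independent (H1); the mixed $(X,V)$ piece involves $\Lie_\kil(g_K(A(X), V)) = g_K([\kil, A(X)]... )$-type terms which vanish because $A(X)$ is a $g_K$-Killing field commuting with $\kil$; and the $(X,Y)$ piece reduces to similar bracket terms plus $\Lie_\kil g_M = 0$ trivially since $\kil$ has no $M_4$ component.

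The main obstacle I anticipate is the bookkeeping in this last verification that $\kil$ is $g_P$-Killing — making sure the Lie-derivative computation on the mixed and horizontal blocks of \eqref{MetricDecomposition} genuinely closes up using only ``$\kil$ commutes with each $e_a$'' and H1, without secretly needing $\kil$ to commute with all of $\Diff(K)$. Since the paper defers this to Proposition \ref{EquivalenceKillingCondition}, in the proof proper I would simply cite that proposition for the equivalence and spend the bulk of the argument on the clean two-line deduction from Lemma \ref{ChargeRotation}: H2 $\Rightarrow$ each $e_a$ is $g_K$-Killing $\Rightarrow$ $[e_a,\kil]=0$ by the commutation hypothesis $\Rightarrow$ the right-hand side of \eqref{SimplifiedEvolution} vanishes $\Rightarrow$ $g_P(\kil,\dot\gamma)$ is a constant of motion.
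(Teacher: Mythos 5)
Your proposal is correct and follows essentially the same route as the paper: under H2 each $e_a$ is a Killing field of $g_K$, so the commutation hypothesis kills the right-hand side of \eqref{SimplifiedEvolution}, giving constancy of $g_P(\kil,\dot{\gamma})$, and Proposition \ref{EquivalenceKillingCondition} then upgrades this to the statement that $\kil$ is Killing for $g_P$. The extra block-by-block verification you sketch is unnecessary, exactly as you suspected, since the paper likewise just invokes the equivalence of conditions in that proposition.
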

If the vector field $\kil$ does not commute with all other Killing fields of $g_K$, then, in general, $\kil$ will not be Killing for $g_P$. However, it is still possible to extract a constant of motion from the subalgebra of the Killing algebra that contains $\kil$. To see how this comes about, we must describe the Killing algebra of $g_K$ in more detail. 

Since $K$ is a compact manifold by assumption, the isometry group of $g_K$ is a finite-dimensional, compact Lie group. The corresponding Lie algebra can be identified with the algebra $\mathfrak{k}$ of Killing vector fields on $K$. It admits a decomposition of the form
\beq  \label{DecompositionKillingAlgebra}
\mathfrak{k} \ = \  \mathfrak{a}_1 \oplus \cdots \oplus \mathfrak{a}_m \ ,
\eeq
where the $ \mathfrak{a}_r$ are either $\mathfrak{u}(1)$ lines or simple, non-abelian Lie algebras. To describe the constants of motion associated to the summands, let $\{ \kil_{r, b}: 1 \leq b \leq \dim \mathfrak{a}_r \}$ denote a basis of $\mathfrak{a}_r$. Each $\kil_{r, b}$ is a Killing field of $g_K$. Since $K$ is compact, it is possible to choose the basis so that its elements are $L^2$-orthonormal on $(K, g_K)$ and the structure constants defined by $[\kil_{r, b}, \, \kil_{r, c}] = (f_r)_{bc}^d\,  \kil_{r, d}$ are totally anti-symmetric in their three indices (see lemma \ref{BasisChoice}). With this choice, we can consider the scalar function
\beq  \label{GeneralCharge}
 \sum\nolimits_{b} \ [\, g_P(\kil_{r, b},  \,\dot{\gamma}) \, ]^{2}
\eeq
associated to the Lie subalgebra $\mathfrak{a}_r$ and the curve $\gamma (s)$ on $P$. Later, in section \ref{ChargeVariation}, we will call it the squared $\mathfrak{a}_r$-charge of the particle represented by $\gamma$, up to a constant factor. This function is independent of the choice of basis $\{ \kil_{r, b} \}$ because, by assumption, this is a $L^2$-orthonormal basis of $\mathfrak{a}_r$. And all such bases are related to each other by linear orthonormal transformations, which preserve the sum of squares \eqref{GeneralCharge}. With these definitions, we have the following result.
\begin{proposition}  \label{ConstancyCouplingConstants}
Let $g_P \simeq (g_M,  A, g_K)$ be a submersive metric on $P$ satisfying assumptions H1 and H2. For each subalgebra $\mathfrak{a}_r$ of the Killing algebra of $g_K$, the scalar \eqref{GeneralCharge} is a constant of motion for every geodesic $\gamma(s)$ on $P$. 
 \end{proposition}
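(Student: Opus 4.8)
The plan is to differentiate the proposed scalar along the geodesic and use Lemma \ref{ChargeRotation} to reduce everything to the structure constants of $\mathfrak{a}_r$, which are totally anti-symmetric by the chosen basis. Write $q_b(s) := g_P(\kil_{r,b}, \dot{\gamma})$ for brevity. Since each $\kil_{r,b}$ is a Killing field of $g_K$ and $g_P$ satisfies H1, Lemma \ref{ChargeRotation} gives
\beq
\frac{\dd}{\dd s}\, q_b \ = \ A^a(\dot{\gamma}_M)\; g_P([e_a, \kil_{r,b}],\, \dot{\gamma}) \ .
\eeq
By assumption H2, each $e_a$ appearing with a non-zero coefficient $A^a(\dot{\gamma}_M)$ can be taken to be a Killing field of $g_K$; decompose it according to \eqref{DecompositionKillingAlgebra} as $e_a = \sum_r e_a^{(r)}$ with $e_a^{(r)} \in \mathfrak{a}_r$. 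Because the summands $\mathfrak{a}_r$ are ideals of $\mathfrak{k}$ that mutually commute, $[e_a, \kil_{r,b}] = [e_a^{(r)}, \kil_{r,b}] \in \mathfrak{a}_r$. So $[e_a, \kil_{r,b}] = \sum_c (c_a)^c_b\, \kil_{r,c}$ for some real coefficients $(c_a)^c_b$ coming from the structure constants of $\mathfrak{a}_r$ in the chosen basis.

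The key algebraic point is the anti-symmetry $(c_a)^c_b = -(c_a)^b_c$, which holds because the bracket coefficients $(f_r)^d_{bc}$ are totally anti-symmetric in the chosen basis (Lemma \ref{BasisChoice}) and $e_a^{(r)}$ is a fixed linear combination of the $\kil_{r,e}$: writing $e_a^{(r)} = \sum_e \lambda_e \kil_{r,e}$, we get $(c_a)^c_b = \sum_e \lambda_e (f_r)^c_{eb}$, which is anti-symmetric under $b \leftrightarrow c$ precisely by the total anti-symmetry of $(f_r)$. With this in hand,
\beq
\frac{\dd}{\dd s} \sum_b q_b^2 \ = \ 2 \sum_b q_b \,\frac{\dd q_b}{\dd s} \ = \ 2\, A^a(\dot{\gamma}_M) \sum_{b,c} (c_a)^c_b \, q_b\, q_c \ = \ 0 \ ,
\eeq
since $\sum_{b,c}(c_a)^c_b\, q_b q_c$ is the contraction of an anti-symmetric tensor with a symmetric one. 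Summing over the (possibly infinitely many, but locally finite) gauge indices $a$ causes no difficulty. This proves the claim.

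\textbf{Main obstacle.}
The only real subtlety is justifying that $[e_a, \kil_{r,b}]$ lands in $\mathfrak{a}_r$ and is governed by the totally anti-symmetric structure constants of that single summand. This rests on two facts that I would spell out carefully: first, that the decomposition \eqref{DecompositionKillingAlgebra} is a decomposition into mutually commuting ideals (standard for a compact Lie algebra, where $\mathfrak{k} = \mathfrak{z}(\mathfrak{k}) \oplus [\mathfrak{k},\mathfrak{k}]$ and the semisimple part splits into simple ideals); and second, that the basis choice of Lemma \ref{BasisChoice} can be made within each summand so that total anti-symmetry holds summand-by-summand — for an abelian $\mathfrak{a}_r = \mathfrak{u}(1)$ this is vacuous since the bracket vanishes and the scalar $q_b^2$ is itself already constant by Lemma \ref{ConstancyCharges}. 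Everything else is the short computation above; no hard analysis is involved, and the passage through Lemma \ref{ChargeRotation} packages the geometry so that only Lie-algebraic bookkeeping remains.
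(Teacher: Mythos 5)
Your proposal is correct and follows essentially the same route as the paper's proof: differentiate the scalar, apply Lemma \ref{ChargeRotation}, use that the summands of \eqref{DecompositionKillingAlgebra} are mutually commuting ideals so only the $\mathfrak{a}_r$-component of $A(\dot{\gamma}_M)$ contributes, and conclude by contracting the totally anti-symmetric structure constants against the symmetric product $q_b q_c$. Your explicit decomposition $e_a = \sum_r e_a^{(r)}$ is just a slightly more careful bookkeeping of the step the paper performs by writing the relevant component of $A(\dot{\gamma}_M)$ as $\sum_c A^c(\dot{\gamma}_M)\,\kil_{r,c}$.
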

\begin{proof} 
To justify the assertion, observe that formula \eqref{SimplifiedEvolution} implies that
\[
\frac{\dd}{\dd s} \ [\, g_P(\kil_{r, b},  \,\dot{\gamma}) \, ]^{2} \; = \; 2\; g_P(\kil_{r, b},  \,\dot{\gamma}) \ \frac{\dd}{\dd s} \ g_P(\kil_{r, b},  \,\dot{\gamma})  \; = \; 2\; g_P(\kil_{r, b},  \,\dot{\gamma})  \    g_P ( [ A(\dot{\gamma}_M) , \, \kil_{r, b} ], \, \dot{\gamma} )    \ .
\]
By assumption H2, the contraction $A(\dot{\gamma}_M)$ is a Killing vector field on $K$, so a vector in the Killing algebra \eqref{DecompositionKillingAlgebra}. Its component in the subspace $\mathfrak{a}_r$ to which $\kil_{r, b}$ belongs is just $ \sum_c A^c(\dot{\gamma}_M) \, \kil_{r, c}$. Since \eqref{DecompositionKillingAlgebra} is a Lie algebra decomposition, not just a vector space decomposition, this component is the only one that matters when calculating the bracket $[ A(\dot{\gamma}_M) , \, \kil_{r, b} ]$. All other components commute with $\kil_{r, b}$. So we have that
\[
[ A(\dot{\gamma}_M) , \, \kil_{r, b} ] \ = \ \sum_c  \, A^c(\dot{\gamma}_M)  \; [ \kil_{r, c} , \, \kil_{r, b} ] \ = \ \sum_{c,d} \, A^c(\dot{\gamma}_M)  \; (f_r)_{cb}^d \; \kil_{r, d} \ .
\]
Thus, we finally get that the derivative is
\beq
\frac{\dd}{\dd s} \ \sum_{b} \ [\, g_P(\kil_{r, b},  \,\dot{\gamma}) \, ]^{2} \ = \ 2\,  \sum_{b, c, d} \  g_P(\kil_{r, b},  \,\dot{\gamma})  \  A^c(\dot{\gamma}_M)  \; (f_r)_{cb}^d \  g_P(\kil_{r, d},  \,\dot{\gamma})   
\ = \ 0 \ , \nonumber
\eeq
where the last equality follows from the anti-symmetry of the structure constants $f_r$ in the indices $b$ and $d$.
\end{proof} 
This proposition gives us constants of geodesic motion in regions of spacetime where all gauge fields are massless and the internal metric is constant. However, inspecting the calculation in the proof, one quickly realizes that some of the original assumptions may be relaxed. We can drop H2 and accept non-zero massive gauge fields as long as those fields commute with the Killing fields of $g_K$. More precisely, the previous proposition can be generalized by a similar calculation to the following result.
\begin{proposition}  \label{ConstancyCouplingConstants2}
Let $g_P \simeq (g_M,  A, g_K)$ be a submersive metric on $P$ satisfying assumption H1. Suppose that the gauge one-form $A(X)$ has values in a subspace of vector fields on $K$ of the form $\mathfrak{a}_r \oplus \mathfrak{h}$ with the commutation relation $[\mathfrak{a}_r, \mathfrak{h}]= 0$. Then the scalar \eqref{GeneralCharge} is a constant of motion for every geodesic $\gamma(s)$ on $P$. 
 \end{proposition}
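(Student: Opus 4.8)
The plan is to mimic the argument of Proposition~\ref{ConstancyCouplingConstants} almost verbatim, using Lemma~\ref{ChargeRotation} as the sole analytic input and then arranging a cancellation from antisymmetry of the structure constants. The only change is that the relevant gauge field no longer needs to be Killing; it only needs to land in $\mathfrak{a}_r \oplus \mathfrak{h}$ with $[\mathfrak{a}_r,\mathfrak{h}]=0$. Note Lemma~\ref{ChargeRotation} requires only assumption~H1, so it still applies and gives $\frac{\dd}{\dd s}\, g_P(\kil_{r,b},\dot\gamma) = A^a(\dot\gamma_M)\, g_P([e_a,\kil_{r,b}],\dot\gamma)$, i.e. $\frac{\dd}{\dd s}\, g_P(\kil_{r,b},\dot\gamma) = g_P([A(\dot\gamma_M),\kil_{r,b}],\dot\gamma)$.

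The key steps, in order:
\begin{itemize}
\item[(1)] \emph{Split the gauge field.} Write $A(\dot\gamma_M) = \alpha + \beta$ with $\alpha \in \mathfrak{a}_r$ and $\beta \in \mathfrak{h}$, the decomposition being pointwise along $\gamma_M(s)$. Then $[A(\dot\gamma_M),\kil_{r,b}] = [\alpha,\kil_{r,b}] + [\beta,\kil_{r,b}]$, and the second bracket vanishes because $[\mathfrak{a}_r,\mathfrak{h}]=0$ and $\kil_{r,b}\in\mathfrak{a}_r$. So only the $\mathfrak{a}_r$-component $\alpha = \sum_c A^c(\dot\gamma_M)\,\kil_{r,c}$ contributes, exactly as in the previous proof.
\item[(2)] \emph{Expand the bracket in the chosen basis.} Using the basis $\{\kil_{r,c}\}$ of $\mathfrak{a}_r$ of Lemma~\ref{BasisChoice}, with totally antisymmetric structure constants $(f_r)_{cb}^d$, get $[A(\dot\gamma_M),\kil_{r,b}] = \sum_{c,d} A^c(\dot\gamma_M)\,(f_r)_{cb}^d\,\kil_{r,d}$.
\item[(3)] \emph{Differentiate the quadratic form and cancel.} Compute
\[
\frac{\dd}{\dd s}\sum_b [\,g_P(\kil_{r,b},\dot\gamma)\,]^2 \;=\; 2\sum_{b,c,d} g_P(\kil_{r,b},\dot\gamma)\,A^c(\dot\gamma_M)\,(f_r)_{cb}^d\, g_P(\kil_{r,d},\dot\gamma),
\]
which vanishes because the symmetric tensor $g_P(\kil_{r,b},\dot\gamma)\,g_P(\kil_{r,d},\dot\gamma)$ is contracted against $(f_r)_{cb}^d$, antisymmetric in $b\leftrightarrow d$.
\end{itemize}

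The only genuine point to check — and the place where a careless adaptation could go wrong — is step~(1): one must be sure that the part of $A(\dot\gamma_M)$ lying in $\mathfrak{h}$ really drops out of the bracket, which is precisely the hypothesis $[\mathfrak{a}_r,\mathfrak{h}]=0$, and that $\kil_{r,b}\in\mathfrak{a}_r$ so the surviving contribution stays inside $\mathfrak{a}_r$ where the antisymmetric-structure-constant trick of step~(3) is available. Once that is in place the computation is identical to the one already carried out, so I do not expect any substantive obstacle; the result is a routine strengthening obtained by replacing "$A$ takes values in Killing fields" (H2) by the weaker "$A$ takes values in $\mathfrak{a}_r\oplus\mathfrak{h}$ with $[\mathfrak{a}_r,\mathfrak{h}]=0$", which suffices because only the $\mathfrak{a}_r$-component ever enters the relevant Lie bracket. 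I would also remark that one need not even require $\mathfrak{h}$ to be a subalgebra or $\mathfrak{a}_r\oplus\mathfrak{h}$ closed under bracket — only the single relation $[\mathfrak{a}_r,\mathfrak{h}]=0$ is used — but I would keep the statement as given for clarity.
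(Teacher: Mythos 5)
Your proposal is correct and follows essentially the same route as the paper, which itself states that Proposition \ref{ConstancyCouplingConstants2} follows ``by a similar calculation'' from the proof of Proposition \ref{ConstancyCouplingConstants}: apply Lemma \ref{ChargeRotation} (which needs only H1 and the fact that each $\kil_{r,b}$ is Killing for $g_K$), discard the $\mathfrak{h}$-component of $A(\dot{\gamma}_M)$ via $[\mathfrak{a}_r,\mathfrak{h}]=0$, and cancel using the total antisymmetry of the structure constants from Lemma \ref{BasisChoice}. Your closing remark that only the single relation $[\mathfrak{a}_r,\mathfrak{h}]=0$ is used, with no closure assumption on $\mathfrak{h}$, is a fair observation consistent with the paper's own example where $\mathfrak{h}=\mathfrak{g}^{\scaleto{L}{4.8pt}}$ consists of non-Killing fields.
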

The difference between the two propositions can be illustrated in the case where $K=G$ is a compact, simple Lie group and $g_K$ is a generic left-invariant metric. For more details about these metrics, see \cite{Milnor, Bap}, for example. The Killing algebra of $g_K$ is the space of right-invariant vector fields on $G$, denoted $\mathfrak{g^\RR}$. If $\{\kil_{b}\}$ is a basis of $\rm{Lie}(G)$ that is orthonormal with respect to the Killing form, then its right-invariant extension to $G$, $\{\kil_{b}^\RR\}$, is a basis of $\mathfrak{g}^\RR$ with totally anti-symmetric structure constants. The left-invariant vector fields on $G$ are not Killing, but their Lie bracket with the right-invariant fields does vanish. So we are in the conditions of Proposition \ref{ConstancyCouplingConstants2}, with $\mathfrak{a}_r = \mathfrak{g^\RR}$ and $\mathfrak{h} = \mathfrak{g^\LL}$. This means that the scalar $\sum\nolimits_{b} \, [\, g_P(\kil_b^\RR,  \,\dot{\gamma}) \, ]^{2}$ will be a constant of motion even in the presence of the massive gauge fields associated to the left-invariant vector fields on $K$.

A similar example shows that these constants of geodesic motion are not necessarily all independent of each other. To this end, still take $K$ to be a compact, simple Lie group, but now let $g_K$ be a bi-invariant metric. Then the Killing algebra of $g_K$ is the sum of the spaces of left-invariant and right-invariant vector fields, $\mathfrak{g}^\LL \oplus \mathfrak{g^\RR}$.  If $\{\kil_{b}\}$ is a basis of $\rm{Lie}(G)$ orthonormal with respect to the Killing form, then its left-invariant extension, denoted $\{\kil_{b}^\LL\}$, is a basis of $\mathfrak{g}^\LL$ with totally anti-symmetric structure constants. Moreover, at every point of $G$, the vector fields $\kil_{b}^\LL$ are $g_K$-orthogonal to each other and span the tangent space to the point. Similarly for the right-invariant extensions $\{\kil_{b}^\RR\}$. Then it is straightforward to check that the two constants of geodesic motion associated by proposition \ref{ConstancyCouplingConstants} to the summands $\mathfrak{g}^\LL$ and $\mathfrak{g}^\RR$ coincide with each other for all geodesics. Both are equal to the norm $g_K(\dot{\gamma}^\VV, \dot{\gamma}^\VV)$, up to a constant factor.

\section{Higher-dimensional momentum and mass}
\label{ParticlesPhysics}

After the mostly geometrical content of sections \ref{GeneralGeodesicsSubmersions} and \ref{MasslessGaugeFields}, in this section we come back to physics.
The motion of test particles on the higher-dimensional space $P$ is parameterized by timelike or null geodesics. Let $\gamma (s)$ be such a geodesic. The higher-dimensional momentum vector of the associated particle is the vector tangent to $P$ defined simply by 
\beq \label{DefinitionHDMomentum}
p(s)  \  := \  \sigma \  \frac{\dd \gamma}{\dd s}  \ =  \ \sigma \ \dot{\gamma} (s) \ .
\eeq
Here $\sigma$ is a positive constant with dimension $\MT$. Since $\gamma$ is geodesic, the momentum vectors are covariantly constant along the curve, $\nabla_{\dot{\gamma}}\, p = 0$. The higher-dimensional momentum vector can be decomposed in two different ways:
\beq 
p(s) \ =\  p^\HH \ + \ p^\VV  \ =\  p_M \ +\  p_K \ .
\eeq 
As in \eqref{TwoDecompositions} and \eqref{DecompositionsTangentCurve}, writing the curve $\gamma (s)$ on $M_4 \times K$ as $(\, \gamma_M (s), \gamma_K(s) \, )$, the different components of momentum are given by
\begin{align}  \label{MomentumComponents}
p_M  \ &=  \  \sigma \ \dot{\gamma}_M \ \linebr
p_K  \ &=  \  \sigma \ \dot{\gamma}_K \   \nonumber  \linebr
p^\HH  \ &=  \ \sigma \ \dot{\gamma}^\HH \ = \ p_M   \,  + \,  A^a  (p_M) \, e_a  \ \nonumber  \linebr
p^\VV  \ &=  \  \sigma \ \dot{\gamma}^\VV \ = \ p_K   \,  - \,  A^a  (p_M) \, e_a  \ \nonumber \ .
\end{align}
For each value of $s$ the vectors $p_M$ and $p_K$ can be regarded as tangent to $M_4$ and $K$, but they are not $g_P$-orthogonal to each other. The vectors  $p^\HH$ and $p^\VV$ are orthogonal on $P$, but they depend on the gauge fields and mix the components along $M_4$ and $K$.

Since $\gamma$ is timelike or null, by \eqref{NormsCurve} the same is true for its projection $\gamma_M$ to $M_4$. So $ g_M (\dot{\gamma}_M, \, \dot{\gamma}_M )$ is non-positive. The particle's 4D proper time $\tau$ is defined, up to an additive constant, by
\beq  \label{ProperTimeDefinition}
c\, \frac{\dd \tau}{\dd s} \ = \ \sqrt{ - \, g_M (\dot{\gamma}_M, \, \dot{\gamma}_M ) } \ .
\eeq
When this derivative is non-zero, $\tau$ can also be used as a parameter for the curve. Then the higher-dimensional momentum vector can be written as  
\beq \label{MassMomentumRelation}
p(s)  \  = \ m(s)\  \frac{\dd \gamma}{\dd \tau}  \ ,
\eeq
where we have defined the particle's rest mass function 
\beq  \label{MassParticle}
m(s) \ := \ \sigma \ \frac{\dd \tau}{\dd s} \ = \  \frac{1}{c}\;  \sqrt{-  g_M\left(p_M ,\, p_M  \right) } \ = \  \frac{1}{c}\;  \sqrt{g_P\left(p^\VV ,\, p^\VV  \right) - g_P\left(p,\, p \right) } \ .
\eeq
The last equality uses \eqref{NormsCurve}. The mass function is independent of the 4D frame because it is defined in terms of proper time. It is positive for all geodesics on $P$ that project down to timelike curves $\gamma_M (s)$ on $M_4$. It vanishes only when $\gamma(s)$ projects to a null curve on $M_4$. Combining \eqref{MassMomentumRelation} with formula \eqref{MomentumComponents} for the momentum components, it is clear that in 4D we have
\[
p_M \ = \ m(s)\  \frac{\dd \gamma_M}{\dd \tau} \ .
\]
Thus, as long as $p_M$ is identified with the particle's 4-momentum on $M_4$, this relation implies that $m(s)$ is the particle's mass. In particular, given inertial coordinates $(t, x^1, x^2, x^3)$ on $M_4$, the energy $E$ and 3-momentum $\bp$ of the particle parameterized by $\gamma (s)$ should be read from the projection $p_M$ through the usual (coordinate-dependent) relation\footnote{The energy of the particle measured by an observer at rest in the frame determined by $(t, x^1, x^2, x^3)$ is  $E\ = \ -\, g_M (p_M, \, \frac{\partial}{\partial t} )$ \cite[sec. 4.3]{Wald}. So for the Minkowski metric \eqref{MinkowskiMetric} we get \eqref{4MomentumComponents}.}
\beq \label{4MomentumComponents}
p_M \ =  \  \frac{E}{c^2}\;  \frac{\partial}{\partial t}  \ + \  p^1 \;  \frac{\partial}{\partial x^1}  \ + \  p^2 \;  \frac{\partial}{\partial x^2}  \ + \  p^3 \;  \frac{\partial}{\partial x^3} \ .
\eeq

Now consider the case where $\gamma$ projects down to a null curve on $M_4$. By the discussion below proposition \ref{HorizontalGeodesics}, $\gamma$ itself must be a horizontal, null curve on $P$. Then relation \eqref{ProperTimeDefinition} says that  $\dd \tau / \dd s$ vanishes, and hence proper time cannot be used to parameterize $\gamma_M(s) = (\gamma^0(s), \gamma^n(s))$. In this case, the relation between momentum and mass expressed in \eqref{MassMomentumRelation} is no longer valid, but the original definitions in \eqref{MomentumComponents} still are. Assuming that the derivative $\dot{\gamma}^0$ does not vanish, the projected momentum $p_M$ can be written as 
\beq  \label{4DMomentumVelocity}
p_M  \ =  \  \sigma \; \frac{\dd \gamma_M}{\dd s} \ = \   \sigma \;  \frac{\dd t}{\dd s} \; \frac{\dd \gamma_M}{\dd t} \ = \   \sigma \; \dot{\gamma}^0 \; \Big( \; \frac{\partial}{\partial t} \ + \  \bv \;\Big) \ ,
\eeq
with $|\bv| = c$ because  $\gamma_M(s)$ is null on $M_4$. Thus, to keep \eqref{4MomentumComponents} valid, we must relate the (massless) particle's energy and 3-momentum to the components of $p_M$ through
\beq  \label{4DEnergyMomentum}
\frac{E}{c^2} \ = \  \sigma \; \dot{\gamma}^0 \qquad \qquad \quad   p^n \ = \ \sigma\; \dot{\gamma}^0  \; v^n \ = \ \sigma\;   \frac{\dd t}{\dd s} \;  \frac{\dd \gamma^n }{\dd t}  \  = \ \sigma\; \dot{\gamma}^n \ .
\eeq
In particular, using Einstein's relation for photons, the frequency associated to the massless particle represented by the null curve $\gamma_M (s)$, in the frame $(t, x^1, x^2, x^3)$, is just $\nu = \sigma \, c^2 \, \dot{\gamma}^0 \,/ \,  h$.

Observe that, in this picture, there is no physical freedom to reparameterize affine geodesics on $P$. More precisely, if $\tilde{\gamma}(s) := \gamma (\alpha\, s)$ is a reparameterization of $\gamma$ by a positive constant, then $\tilde{\gamma}(s)$ is still a geodesic on $P$, of course, but it now represents the motion of a test particle with a different higher-dimensional momentum. Evaluating the momentum at the point $\tilde{\gamma} (s)$ in $P$, we have
\[
\tilde{p} \; |_{\tilde{\gamma} (s)} \ = \ \sigma \  \frac{\dd \tilde{\gamma}}{\dd s} \;  |_{\tilde{\gamma} (s)} \ = \ \sigma \ \alpha\ \frac{\dd \gamma}{\dd s} \;  |_{\gamma (\alpha\, s)} \ = \  \alpha \; p \; |_{\gamma (\alpha \, s)} \ =  \alpha \; p \; |_{\tilde{\gamma} (s)}\ .
\]
So $\tilde{p} = \alpha\, p$ as vector fields over the locus on $P$ of the curve.  Also, at the point $q = \tilde{\gamma} (s)$, the masses of the particles represented by $\tilde{\gamma}$ and $\gamma$, as written in \eqref{MassParticle}, are related to each other by
\[
\tilde{m}^2 \; |_q \ = \  - \, c^{-2}\; g_P\big(\tilde{p}^\HH , \tilde{p}^\HH \big) \ |_q  \ = \  - \, c^{-2}\;  \alpha^2 \, g_P \big(p^\HH , p^\HH \big) \ |_q \ = \  \alpha^2 \; m^2 \; |_q \ .
\]
So $\tilde{\gamma}$ and $\gamma$ parameterize the motion on $P$ of different physical particles. In the case where $\tilde{\gamma}$ and $\gamma$ are horizontal geodesics, both $\tilde{m}$ and $m$ vanish, of course. But a similar argument implies that the 4-momenta on $M_4$ of the respective particles are related by
\[
\tilde{p}_M \; |_{\pi (q)} \ =\  \alpha\ p_M \; |_{\pi (q)} \ .
\]
Hence, the two geodesics describe the motion of two massless particles with different energies in the frame $(t, x^1, x^2, x^3)$.

\section{Rest mass variation}
\label{MassVariation}

This section studies one of the most interesting features of Kaluza-Klein models: the possibility of rest mass variation of a test particle in free fall along a higher-dimensional geodesic.
According to \eqref{SecondDerivativeProperTime}, the scalar $m(s)$ defined in \eqref{MassParticle} is a constant of motion when the geodesic traverses regions in $P$ with totally geodesic fibres, i.e. regions where $S= \dd^A g_K = 0$. In those regions, the geodesic parameterizes the motion of a test particle with fixed rest mass. In regions where the internal metric $g_K$ is not covariantly constant, the same formula \eqref{SecondDerivativeProperTime} implies that the mass function can vary according to 
\beq
c^2\ \frac{\dd}{\dd s} \; m^2(s)  \ = \  2  \; g_P \big( S_{ p^\VV } \, p^\VV , \,  \dot{\gamma}^\HH \big) \ = \ -  \; (\dd^A g_K)_{\dot{\gamma}_M} (p^\VV, p^\VV) \ .
\eeq
The last equality uses \eqref{IdentitySecondFundForm}. At the same time, identity \eqref{NormsCurve} says that the different components of the momentum vector defined in \eqref{MomentumComponents} are related by
\beq  \label{RelationMassMomentum}
-\, g_M(p_M , p_M ) \; = \; - \, g_P(p^\HH , p^\HH ) \; = \;  g_P (p^\VV , p^\VV ) \; - \; g_P(p,p) \ .
\eeq
Since norm of the full momentum $g_P(p, p)$ is a constant of geodesic motion, it follows from \eqref{MassParticle} that
\beq
\frac{\dd}{\dd s} \; g_P (p^\VV , p^\VV )  \; = \;   -\, \frac{\dd}{\dd s} \; g_P (p^\HH , p^\HH ) \ = \  c^2\ \frac{\dd}{\dd s} \; m^2(s)   \ .
\eeq
So the norms of the horizontal and vertical components of momentum may change with opposite signs along the motion, while the norm of the full momentum $p$ remains constant. The picture is that, in regions with non-vanishing $\dd^A g_K$, the higher-dimensional geometry is distorted, so forces a transfer between the horizontal and vertical components of momentum of particles in free fall along geodesics. Since the norm of the horizontal momentum corresponds in 4D to the particle's rest mass, in those regions we observe a phenomenon of rest mass variation.

For example, suppose that for $s< s_1$ a geodesic $\gamma (s)$ traverses a region of $P$ with vanishing tensor $\dd^A g_K$; after that, it enters a region where massive gauge fields are present and the internal scalars are not constant; then, for $s > s_2$ it comes out into another region with vanishing $\dd^A g_K$. The test particle parameterized by this geodesic has a well-defined and constant mass inside the first and third regions. Those two masses may not coincide, because the geodesic traversed a region with changing internal geometry, which is able to transfer between vertical momentum and horizontal momentum. The classical mass change in the overall process is given by
\[
m^2(s_2) \ - \ m^2(s_1) \ = \ -\; \frac{1}{c^2} \;  \int_{s_1}^{s_2}  (\dd^A g_K)_{\dot{\gamma}_M} (p^\VV, p^\VV) \ \dd s  \ .
\]
Qualitatively, this property of geodesics agrees with the physical fact that in regions where massive gauge fields are present, or where massive scalar fields (such as the Higgs field) are non-constant, particles may interact with the respective massive bosons and undergo a process of mass change. Of course, in this case, we would not call the incoming and outgoing objects the same particle anymore, since, by convention, particles with different rest masses are called different names.

\section{Charges and charge variation}
\label{ChargeVariation}

After mass, we consider charge. 
Let $\kil$ be a Killing vector field on $(K, g_K)$ that commutes with all other Killing fields of $g_K$. Let $\gamma (s)$ be a geodesic on $P$ parameterizing the motion of a test particle with momentum $p = \sigma \, \dot{\gamma}$. We define the particle's charge with respect to $\kil$ to be the scalar 
\beq   \label{DefinitionCharge}
q_\kil (s) \ := \ -\, g_P(\kil, \, p )  \ = \ -\, g_K(\kil, \, p^\VV ) \ .
\eeq
So $q_\kil (s)$ measures the component of the particle's momentum along the vertical Killing field $\kil$. According to lemma \ref{ConstancyCharges}, this scalar is a constant of motion when the geodesic traverses regions of $P$ that satisfy conditions H1 and H2. Just like the particle's mass $m(s)$. Now suppose that in those regions the gauge form has values in the span of $\kil$, i.e. suppose that we can write $A(X) = A^\kil(X)\, \kil$ for all $X$ tangent to $M_4$. Then formula \eqref{DecompositionGeodesicEquation2} says that the projection to $M_4$ of the geodesic $\gamma (s)$ satisfies  
\[
g_M \big( \nabla^M_{\dot{\gamma}_M} \, \dot{\gamma}_M ,\, Y \big)   \  = \   \frac{1}{\sigma} \ q_\kil(s) \  F_{A}^\kil \big(Y, \, \dot{\gamma}_M \big) \ .
\]
But by definition of $m(s)$ we have
\[
\dot{\gamma}_M \ = \ \frac{\dd \gamma_M}{\dd s} \ = \ \frac{m(s)}{\sigma} \ \frac{\dd \gamma_M}{\dd \tau}  \ .
\]
Thus, in terms of the 4-velocity $\dd \gamma_M / \dd \tau$ on Minkowski space, we get
\beq \label{LorentzForceEquation}
g_M \Big( \nabla^M_{\frac{\dd \gamma_\MMM}{\dd \tau} } \, \frac{\dd \gamma_M}{\dd \tau} ,\ Y \Big)   \  = \  \frac{q_\kil}{m} \ \,  F_{A}^\kil \Big(Y, \ \frac{\dd \gamma_M}{\dd \tau} \Big) \ .
\eeq
So the projection to $M_4$ of the particle's motion satisfies a Lorentz force equation \cite[sec. 4.3]{Wald}, as long as we identify the scalar $q_\kil(s)$ with the particle's charge. This calculation justifies that interpretation of the conserved scalar. It is very familiar from 5D Kaluza-Klein \cite{Kaluza, Kovacs, GK, CE, KMMH}. So when the electromagnetic field is the only gauge field present, higher-dimensional Kaluza-Klein is similar to the 5D version. It is not quite the same though, and this can be used to mitigate some of the traditional difficulties of the 5D setting, as will be discussed in section \ref{DiscussionDifficulties}. 

A curious consequence of the definitions of charge and mass of a test particle is that
\beq   \label{MassChargeInequality}
|q_\kil |^2 \ = \  | g_K(\kil, \, p^\VV ) |^2 \ \leq \ g_K(\kil, \, \kil ) \ g_K(p^\VV, \, p^\VV )  \leq \ g_K(\kil, \, \kil ) \; m^2 \, c^2 \ .
\eeq
The last inequality uses \eqref{MassParticle} and the fact that $g_P(p,p)$ is non-positive for any timelike or null geodesic on $P$. Therefore, in this geodesic model, test particles with a given classical mass at a point $\gamma (s)$ cannot have arbitrarily strong charge. There is a maximum limit determined by the norm of the Killing field at that point. The inequality above is saturated only when the particle's momentum $p$ is null and, additionally, its vertical component is proportional to $\kil$ at the point.

Now suppose that condition H1 of section \ref{MasslessGaugeFields} is satisfied but condition H2 is not. In other words, we are in a region of $P$ where the internal metrics $g_K$ are constant but there are both massive and massless gauge fields present. Let $\kil$ denote the same Killing field of $g_K$ as before. It is a direct consequence of lemma \ref{ChargeRotation} and definition  \eqref{DefinitionCharge} that the charge of the test particle represented by a geodesic $\gamma (s)$ will evolve according to
\beq    \label{SimplifiedEvolutionCharge}
\frac{\dd}{\dd s} \; q_\kil (s) \ = \ - \;  \frac{\dd}{\dd s} \ g_P (  \kil,  \, p ) \ = \  A^a(\dot{\gamma}_M) \  g_P ( [ \kil , \, e_a], \, p )   \  .
\eeq
Now, by assumption, $\kil$ is an electromagnetic-like internal Killing field, i.e. it is a Killing field that commutes with all other Killing fields of $g_K$. So if only massless gauge fields are non-zero, the corresponding Lie brackets $[ \kil , \, e_a]$ will vanish on $K$, and charge is conserved. This is the content of lemma \ref{ConstancyCharges} of course. If there are massive gauge fields around, but the corresponding internal fields $e_a$ are such that $[ \kil , \, e_a] = 0$, then charge is still conserved. This is a special case of proposition \ref{ConstancyCouplingConstants2}. However, if we are in a region of spacetime with non-zero massive gauge fields $A^a$ such that  $[ \kil , \, e_a]$ does not vanish, then the charge $q_\kil (s)$ may no longer be a constant of geodesic motion. It will indeed vary if the particle's internal momentum $p^\VV$ is not orthogonal to $[ \kil , \, e_a]$ in the tangent space to $K$. 

These properties of geodesic motion agree with the physical fact that gauge interactions mediated by massive bosons will preserve a particle's charge if the bosons are neutral (i.e. if $[ \kil , \, e_a]$ vanishes). This is illustrated by the Z boson of the Standard Model. But the interactions will not preserve the particle's charge if the gauge bosons are charged  (i.e. if $[ \kil , \, e_a]$ is non-zero), as in the case of the two W bosons. So there is a natural physical interpretation of the phenomenon of geodesic charge variation described by \eqref{SimplifiedEvolutionCharge}.

The fact that 4D electromagnetic charge may vary along a geodesic in higher dimensions, as in \eqref{SimplifiedEvolutionCharge}, apparently has not been reported before. That may be related to the fact that massive gauge fields are usually discarded in the Kaluza-Klein literature, after the quick remark that their bosons will have masses in the Planck scale. This remark is not fully justified, in the author's view, having in mind the mass formula \eqref{MassFormula} (see the discussion in \cite{Bap}). For point particles, the general relations between mass, charges, and the direction of internal motion also do not seem to be entirely clear in the literature. That may be related to the fact that, in the thoroughly studied 5D setting, with its unidimensional internal space, it is not possible to distinguish between the direction of internal motion and the direction of the internal Killing vector field.

As in section \ref{MasslessGaugeFields}, under assumption H1 of a constant internal metric, it is possible to define a scalar function associated with the path $\gamma(s)$ and each summand $\mathfrak{a}_r$ in decomposition \eqref{DecompositionKillingAlgebra} of the Killing algebra of $g_K$. To define it, let $\{ \kil_{r, b} \}$ be a set of Killing fields  forming an $L^2$-orthonormal basis of $\mathfrak{a}_r$. Using that the higher-dimensional momentum vector is $p(s) = \sigma \dot{\gamma}$, we write
\beq \label{GeneralChargeParticle}
\alpha^2_{\mathfrak{a}_r, \gamma} (s) \ := \  \sum\nolimits_{b}  \big[\, g_P(\kil_{r, b},  \,p) \, \big]^{2} \ .
\eeq
We call this function the (squared) $\mathfrak{a}_r$-charge of the particle represented by the curve $\gamma$. It measures how orthogonal the particle's momentum vector is to the subspace of $T_{\gamma (s)} P$ spanned by the Killing fields in the summand $\mathfrak{a}_r$. It is a constant of geodesic motion if only massless gauge fields are present, as shown in proposition \ref{ConstancyCouplingConstants}. When massive gauge fields are non-zero, the scalar evolves along a geodesic $\gamma(s)$ according to
\beq    \label{SimplifiedEvolutionCouplingScalars}
\frac{\dd}{\dd s} \; \alpha^2_{\mathfrak{a}_r , \gamma} (s) \ = \  2\;  \sum\nolimits_{b} \;  g_P(\kil_{r, b},  \, p)  \    g_P \big( [ A(\dot{\gamma}_M) , \, \kil_{r, b} ], \, p \big)    \ .
\eeq
The derivation of this formula is similar to that of \eqref{SimplifiedEvolutionCharge}. It is justified in the proof of proposition \ref{ConstancyCouplingConstants2}. As shown in proposition \ref{ConstancyCouplingConstants}, after the decomposition $A = \sum_a A^a\, e_a$ of the gauge form on $M_4$, the massless components $A^a$ do not contribute to the right-hand side of \eqref{SimplifiedEvolutionCouplingScalars}. Only massive gauge fields do. Moreover, a massive gauge field $A^a$ such that the associated internal vector field satisfies $[e_a, v] = 0$ for all $v \in \mathfrak{a}_r$, will also not contribute to the right-hand side of \eqref{SimplifiedEvolutionCouplingScalars}. This is the content of proposition \ref{ConstancyCouplingConstants2}, of course. 
By analogy with the electromagnetic case, it can be called a $\mathfrak{a}_r$-neutral gauge field. A gauge field that does not satisfy the commutation condition can be called a $\mathfrak{a}_r$-charged field. Then, by definition of decomposition \eqref{DecompositionKillingAlgebra}, all massless gauge fields are $\mathfrak{a}_r$-neutral except those with values in $\mathfrak{a}_r$ itself. Those will be $\mathfrak{a}_r$-charged (resp. $\mathfrak{a}_r$-neutral) when $\mathfrak{a}_r$ is a non-abelian (resp. an abelian) subalgebra of the Killing algebra of $K$. Massive gauge fields on $M_4$, in turn, will generically be $\mathfrak{a}_r$-charged, but some can be $\mathfrak{a}_r$-neutral.

To end this section, we note that the charge variation described by \eqref{SimplifiedEvolutionCharge} and \eqref{SimplifiedEvolutionCouplingScalars} is different from the rotation of non-abelian isospin along a geodesic, found by Kerner and Wong in \cite{Kerner, Wong} and reviewed in \cite{HZ}, for example. The rotation of isospin was derived in the conditions of constant internal metric $g_K$ and massless gauge fields. It is trivial for abelian charges. In contrast, the charge variation described by \eqref{SimplifiedEvolutionCharge} and \eqref{SimplifiedEvolutionCouplingScalars} is due to massive gauge fields and is non-trivial for abelian charges. 
The relation between the two effects can be summarized by saying that the constants of motion of isospin rotation become variable quantities when massive gauge fields are introduced in the picture. Then \eqref{SimplifiedEvolutionCouplingScalars} describes the variation along a geodesic of those former constants of motion.

\section{A unique speed in higher dimensions}
\label{AUniqueSpeed}

Kaluza-Klein models strive for conceptual simplicity at the classical level, before thinking about quantization. Following this motto, in this section we remark the naturalness of the hypothesis that all elementary particles travel at the speed of light in higher dimensions. It is the projection of velocities to three dimensions that appears to produce speeds in the range $[0, c]$, as observed macroscopically. 
Under this assumption, the previous geodesic model is simplified, as we will see. Both massless and massive particles satisfy a photon-like energy-momentum relation in higher dimensions, which projects down to the usual energy-momentum relation in 4D. The energy stored in 3D rest mass of classical particles becomes entirely due to the kinetic energy of internal motion. 

Let us then assume that test particles always follow null geodesics on $P$. In a Riemannian submersion, we know that tangent vectors to a null path $\gamma (s)$ satisfy
\beq \label{NormsCurveNull}
- \,  g_M ( \dot{\gamma}_M, \, \dot{\gamma}_M)\ = \  - \, g_P (\dot{\gamma}^\HH, \, \dot{\gamma}^\HH) \, = \,  g_P (\dot{\gamma}^\VV, \, \dot{\gamma}^\VV) \;  \ .
\eeq
This is just a simplification of \eqref{NormsCurve}. Since $\dot{\gamma}^\VV$ is tangent to the fibres and the restrictions of $g_P$ to those fibres are the Riemannian metrics $g_K$, the right-hand side is always non-negative. It is zero only if $\dot{\gamma}^\VV$ vanishes. So the four-dimensional projection $\gamma_M (s)$ of the null path is timelike on $M_4$ when $\gamma$ has a vertical component and is null when $\gamma$ is completely horizontal. The projection of a null path on $P$ can never be spacelike on $M_4$. Thus, higher-dimensional null paths on $P$ can describe all types of causal motion on $M_4$ and never correspond to acausal ones. In the name of simplicity, it is natural to investigate the consistency of a dynamical model entirely based on higher-dimensional null paths.

From \eqref{DifferenceProperTime}, the 4D proper time of a particle moving along a null curve on $P$ satisfies the simplified relation
\[
c\, [ \tau(s_1)\, -\, \tau(s_2) ] \  = \  \int_{s_1}^{s_2} \sqrt{g_K (\dot{\gamma}^\VV, \, \dot{\gamma}^\VV)  }  \ \dd s \ .
\]
When there are no gauge fields around, we have $\dot{\gamma}^\VV = \dot{\gamma}_K$ and proper time is just a measure of the Riemannian distance travelled by the particle in the internal space.

Regarding mass, for a null path in higher dimensions, relation \eqref{MassParticle} reduces to 
\beq  \label{MassParticleNull}
m(s) \ = \  c^{-1}\,  \sqrt{g_P\left(p^\VV ,\, p^\VV  \right)  } \ .
\eeq
So the test particle's rest mass is simply the norm of its vertical momentum. Internal motion is the sole source of rest energy. Mass vanishes for particles travelling along horizontal, null paths on $P$, which of course also project down to null paths on $M_4$. So mass vanishes if and only if the particle has speed $c$ on Minkowski space. Rest mass will vary if there is a transfer between the horizontal and vertical components of momentum. This can happen when the geometry of $P$ is sufficiently distorted in comparison to the vacuum geometry $g_M + g_K$. More precisely, when  $\dd^A g_K$ does not vanish, as described in section \ref{MassVariation}. The total momentum $p(s)$ is always covariantly conserved along a higher-dimensional geodesic.

For a null curve on $P$, relation \eqref{NormsCurveNull} implies that the momentum components satisfy
\beq \label{NormsMomentumNull}
- \,  g_M ( p_M, \, p_M)\ = \  - \, g_P (p^\HH, \, p^\HH) \, = \,  g_P (p^\VV, \, p^\VV)  \, = \, c^2\, m^2 \;  \ .
\eeq
So the horizontal and vertical momenta have the same norm but opposite signs. This follows from the vanishing of $g_P(p,p)$, of course. Now suppose that $g_M$ is the Minkowski metric \eqref{MinkowskiMetric}, choose coordinates $(t, x^1, x^2, x^3)$ on $M_4$ and decompose the 4-momentum as 
\[
p_M \ = \ \frac{E}{c^2} \ \frac{\partial}{\partial t}\ + \ \bp \ ,
\]
as in \eqref{4MomentumComponents}. Then the momentum relation \eqref{NormsMomentumNull} can be rewritten as the 4D relation
\[
E^2 \ = \ c^2 \; |\bp |^2 \ + \ m^2\, c^4   \ ,
\]
or, alternatively, as the higher-dimensional relation
\beq  \label{EnergyMomentumNull}
E^2 \ = \ c^2 \; g_P \left( \bp^\HH + p^\VV, \; \bp^\HH + p^\VV \right)  \ .
\eeq
The former is the usual 4D energy-momentum relation. The latter is similar to the energy-momentum relation of photons, but in higher dimensions. This is not surprising, because we are assuming that 
test particles follow null paths on $P$, just as photons do in 4D.  

Dividing equation \eqref{EnergyMomentumNull} by $(\sigma \,c\,  \dot{\gamma}^0)^2$ and using \eqref{4DMomentumVelocity} and \eqref{4DEnergyMomentum}, we obtain the relation for velocities in the chosen frame,
\beq  \label{VelocitiesNull}
c^2 \ = \ g_P \big(\bv^\HH + v^\VV, \; \bv^\HH + v^\VV \big) \ = \ g_M(\bv, \bv) \; + \; g_K \big(v^\VV, v^\VV \big) \ .
\eeq
Thus, for non-trivial, null geodesics on $P$, the norm of the velocity vector in $\mathbb{R}^3 \times K$ is always $c$. Test particles always travel at the speed of light in higher dimensions, as expected. The projection of velocities to three dimensions is the sole reason for the appearance of speeds in the range $[0, c]$. 

A final remark now. Since the very beginnings of Kaluza-Klein, it has been well known that null geodesics in higher dimensions can describe the 4D motion of massive particles. However, in the traditional 5D model, it is not tenable to advocate that all elementary particles move along null geodesics on $M_4 \times S^1$. Even forgetting about the strong and weak forces. This is because the null geodesics of a 5D metric with circle isometry cannot reproduce all the combinations of mass and electromagnetic charge observed in elementary particles. For example, null geodesics in 5D cannot naturally describe particles with zero charge and non-zero mass, such as neutrinos. Or the existence of particles with the same charge but different masses moving on a common background 5D metric. In fact, when the internal space is unidimensional, mass and charge are proportional to each other for particles following null geodesics. This is a consequence of \eqref{MassParticleNull} and \eqref{DefinitionCharge}.  For this reason, in the 5D model, it is desirable to keep timelike geodesics on $M_4 \times S^1$ as representations of physical motions. The norm of the tangent to a timelike geodesic is a useful free parameter to adjust the rest mass of a particle without affecting its charge. In a geodesic model with a higher-dimensional $K$, however, the situation is very different. Now we can have arbitrary angles between the direction of internal motion and the electromagnetic Killing field on $K$. So \eqref{MassParticleNull} and \eqref{DefinitionCharge} imply that the mass and charge of a particle following a null geodesic become independent quantities, except for inequality \eqref{MassChargeInequality}. Thus, for a higher-dimensional $K$, it does not seem necessary to use timelike geodesics on $M_4 \times K$ to model the masses and (abelian or non-abelian) charges of different particles. Null geodesics seem to suffice. The exclusive use of null geodesics in higher dimensions also fits better with the commonly stated aim of representing fermions by solutions of a single, massless, Dirac-like equation for higher-dimensional spinors.

\section{Parameterizing the space of null geodesics}
\label{SectionSpaceGeodesics}

Motivated by the model described in the last section, we will now have a closer look at null geodesics on $P$. Denote by $\NN_h$ the space of non-trivial, null geodesics $\gamma(s)$ passing through a point $h$ in $P$ at the parameter value $s=0$. The trivial geodesic we are excluding is the one with image $h$ for all $s$. Standard properties of the geodesic equation say that $\NN_h$ can be parameterized by the non-zero, null vectors $\dot{\gamma}(0)$ in the tangent space $T_h P$. Each of these vectors is parallelly transported along the geodesic it determines, so if $\dot{\gamma}$ is non-zero and null at $s=0$, it will have the same properties for every value of $s$. 

Now pick a local coordinate system $(t, x^n, y^j)$ on $M_4\times K$, where the $x^n$ and $y^j$ are the coordinates in $\mathbb{R}^3$ and $K$, respectively. Write the geodesic as 
\[
\gamma(s) \ = \ \big( \, \gamma_M(s), \, \gamma_K(s) \, \big) \ = \ \big( \, \gamma^0 (s), \, \gamma^n(s), \,\gamma^j(s) \, \big) \ .
\]
Since $\dot{\gamma} (s)$ is null and non-zero, the time component $\dot{\gamma}^0(s)$ must be non-zero for all $s$. In particular, it is always positive or always negative. So the function $s \mapsto \gamma^0(s)$ is strictly monotonous. According to the sign of $\dot{\gamma}^0$ we can divide the space of geodesics $\NN_h$ into its components  $\NN_h^+$ and $\NN_h^-$, corresponding to particles moving forward and backward in time, respectively. 

Each of these components is isomorphic to $\mathbb{R}^{3+k} \setminus \! \{0\}$, where $k$ denotes the dimension of $K$. For example, thinking of $\NN_h^+$ as a subset of $T_hP$, there is a bijection
\beq  \label{Bijection1}
\varphi_1 :\; \mathbb{R}^{3+k} \setminus \! \{0\}  \ \longrightarrow \ \NN_h^+ \qquad  \qquad  (u^1, u^2, u^3 , w^j) \ \longmapsto \ (  u^\HH , \, w  ) \ ,
\eeq
where we have defined
\begin{align}    \label{ComponentsImageBijection1}
\bu \ &:= \  \sum\nolimits_{n=1}^3 \; u^n  \, \frac{\partial}{\partial x^n}   &     w \  &:= \  \sum\nolimits_j \, w^j  \, \frac{\partial}{\partial y^j}   \nonumber \linebr
u \ &:= \   \bu \ + \ c^{-1}\, \sqrt{|\bu|^2 + g_K(w,w)} \;  \frac{\partial}{\partial t} 
\end{align}
and the horizontal lift $u^\HH$ is taken according to \eqref{DefinitionHorizontalDistribution}. By construction, $\varphi_1(u, w)$ is always a non-zero, null vector in $T_hP$.

The bijection $\varphi_1$ parameterizes all the null geodesics starting at $h$ and moving forward in time. These represent the motion on $P$ of particles with different masses. Fixing the mass of the particle means restricting to a subset $\NN_h^+(m)$ of the full space $\NN_h^+$. According to \eqref{MassParticle}, after the simplification $g_P(p,p)= 0$, the mass of a particle at $h$ is just $c^{-1}\sqrt{g_P (p^\VV, p^\VV)}$. So it essentially corresponds to the norm of the vertical vector $w$ in \eqref{ComponentsImageBijection1}. Thus, under the bijection $\varphi_1$, the space of null geodesics at $h$ of mass $m$ corresponds to the subset
\beq   \label{SetGeodesicsFixedMass1}
\NN_h^+(m) \ \simeq \ \big\{ \, (u^1, u^2, u^3 , w^j) \in  \mathbb{R}^{3+k} \setminus \! \{0\} : \  (g_K\, |_h )_{ij} \, w^i w^j = \sigma^{-2}\,  c^2 \, m^2 \, \big\}
\eeq
of the full $\mathbb{R}^{3+k} \setminus \! \{0\}$. Topologically, it is clear that $\NN_h^+(m)$ is isomorphic to $\mathbb{R}^{3} \times S^{k-1}$ for non-zero $m$ and to $\mathbb{R}^{3} \setminus \! \{0\}$ for vanishing mass. 

When the internal metric $g_K$ has isometries in the region around the point $h$, we can also consider the space of null geodesics representing particles with given charges, besides a given mass. For example, one can consider the subset $\NN_h^+(q_\kil) \subset \NN_h^+$ representing particles with electromagnetic charge $q_\kil$ with respect to a Killing field $\kil$ of $g_K$. The charge condition \eqref{DefinitionCharge} imposes the constraint 
\beq  \label{ChargeCondition}
(g_K\, |_h )_{ij} \, \kil^i \, w^j  \ = \ - \; \sigma^{-1}  \, q_\kil 
\eeq
on the coordinates $w^j$ of $\mathbb{R}^{3+k} \setminus \! \{0\}$. So the subspaces $\NN_h^+(q_\kil)$ are isomorphic to $\mathbb{R}^{2+k} \setminus \! \{0\}$. 

Fixing mass and electromagnetic charge simultaneously defines the smaller subspaces $\NN_h^+(m, q_\kil) = \NN_h^+(m) \cap \NN_h^+(q_\kil)$ inside the space of null geodesics $\NN_h^+$. Adding the mass condition  in \eqref{SetGeodesicsFixedMass1} to the charge condition \eqref{ChargeCondition}, it is clear that these spaces are isomorphic to 
\beq \label{SpaceGeodesics}
\NN_h^+(m, q_\kil) \ \simeq \ \begin{cases}
\emptyset               & \text{if }\   |q_\kil|  \; > \;   m\, c\, |\kil|   \\
\mathbb{R}^{3}     & \text{if }\  |q_\kil| \; =\;   m\, c\, |\kil|  \\   
 \mathbb{R}^{3} \times S^{k-2} & \text{if } \ |q_\kil| \ <\  m\, c\, |\kil|   \ .
\end{cases}
\eeq
Here $|\kil|$ denotes the Riemannian length $\sqrt{(g_K)_h (\kil,\kil)}$ of the Killing vector field at the point $h$. The upper bound for the charge of a test particle at the point $h$, given the value of its mass, is of course the same as in formula \eqref{MassChargeInequality} of section \ref{ChargeVariation}.

The higher-dimensional momenta of particles with the strongest possible charge, $ |q_\kil|  =  m\, c\, |\kil| $, have fewer degrees of freedom. This happens because the vertical component of momentum must be completely aligned with $\kil$. Hence, the motion of a particle of maximum charge is completely determined by an initial point in $P$ and by the 3D momentum vector at that point. For particles with weaker charges, the situation is different. One also needs to choose the components of the vertical momentum at the initial point, and there is a $S^{k-2}$ of possible choices compatible with the mass and charge of the particle.

If we fix more constants of motion of the particle besides mass and electromagnetic charge, for example, the constants described in proposition \ref{ConstancyCouplingConstants}, then the space of representative null geodesics at $h$ will be further constrained.

\subsection*{Celestial spheres}

We will now describe a second parameterization of the space of null geodesics $\NN_h^+$. It uses physical velocities instead of momenta. Its construction  relies on the fact that the time component $\dot{\gamma}^0$ is always positive for geodesics in $\NN_h^+$. Given a null geodesic, define
\beq  \label{DefinitionAux1}
v \ := \ \frac{1}{\dot{\gamma}^0} \ \dot{\gamma}   
\eeq
as a tangent vector in $T_h P$. Recalling that $\gamma^0 (s)$ is just the time coordinate of the geodesic, the projection of $v$ to the 4D tangent space $T_{\pi (h)} M_4$ can be written as
\[
\pi_\ast (v) \ = \ \Big( \frac{\dd \gamma^0}{\dd s} \Big)^{-1} \, \sum_{n=1}^4 \; \frac{\dd \gamma^n}{\dd s} \ \frac{\partial}{\partial x^n} \ = \ \frac{\partial}{\partial t} \ +  \ \sum_{n=1}^3 \; \frac{\dd \gamma^n}{\dd t} \ \frac{\partial}{\partial x^n} \ = \  \frac{\partial}{\partial t} \ + \ \bv  \ .
\]
Here $\bv$ is the derivative of position in $\mathbb{R}^3$ with respect to time, so it is the 3-dimensional velocity vector of the particle in the coordinates $(t, x^1, x^2 , x^3)$. Decomposing $v$ into its horizontal and vertical components, we then have
\[
v \ = \ v^\HH \ + \ v^\VV \ = \ \Big(  \frac{\partial}{\partial t} \ + \ \bv \Big)^\HH \, + \; v^\VV \ .
\]
The vector $\bv^\HH  +  v^\VV$ is the derivative of position in $\mathbb{R}^3 \times K$ with respect to time. So it is the higher-dimensional velocity vector of the particle represented by $\gamma (s)$, in the coordinate system.
Since $\dot{\gamma}$ and $v$ are null vectors at $h$, we have $g_P(v,v)=0$, and hence
\begin{align}
c^2 \ &= \ -\, g_M \Big( \frac{\partial}{\partial t}, \;  \frac{\partial}{\partial t} \Big) \ = \ -\, g_P\Big( \Big(  \frac{\partial}{\partial t} \Big)^\HH, \;  \Big(  \frac{\partial}{\partial t}  \Big)^\HH \Big) \nonumber \linebr
&= \ g_P \big(\bv^\HH + v^\VV, \; \bv^\HH + v^\VV \big) \ = \ g_M(\bv, \bv) \; + \; g_K \big(v^\VV, v^\VV \big) \ .
\end{align}
Here we have also used the form \eqref{MinkowskiMetric} of the Minkowski metric and the fact that $g_P$ applied to horizontal vectors coincides with $g_M$ applied to the 4D projection of those vectors. 
Thus, the velocity vectors $\bv^\HH + v^\VV$ associated with null geodesics at $h$ define a Euclidean sphere $S_c^{k+2}$ of dimension $k+2$ and radius $c$ inside the tangent space $T_hP$. Generalizing the terminology of the 4D case, this will be called the celestial sphere at $h$. 

It is clear from this discussion that the correspondence between geodesics in $\NN_h^+$ and points in the celestial sphere of velocities is surjective but not injective. A reparameterization of the geodesic produces a constant factor in $\dot{\gamma}$ that cancels out in the quotient \eqref{DefinitionAux1}, so leads to the same velocity vector $v$. Such constant factors can be explicitly controlled by the time component $\dot{\gamma}^0$ of the derivative of $\gamma(s)$ at $h$. But, according to \eqref{4DEnergyMomentum}, this component is proportional to the energy of the particle in the coordinate system, 
\beq   \label{EnergyGamma0}
E \ =\ \sigma\, c^2\; \dot{\gamma}^0 \ .
\eeq
This means that the correspondence that takes a non-zero, null geodesic at $h$ to the velocity vector plus the energy of the particle it represents, 
\[
\varphi_2 :\;  \NN_h^+ \ \longrightarrow \  S^{k+2}_c \times \mathbb{R}^+ \qquad  \qquad  \ \dot{\gamma}_h \ \longmapsto \ \big(\, \bv^\HH \, +\, v^\VV , \, E \, \big)
\]
defines a second bijection that can be used to parameterize $\NN_h^+$. Of course, topologically, $S^{k+2}_c \times \mathbb{R}^+$ is isomorphic to the parameter space $\mathbb{R}^{3+k} \setminus \! \{0\}$ used in bijection \eqref{Bijection1}. The virtue of the second parameterization is that it is expressed in terms of clear physical quantities, namely the velocity and energy of the particle represented by the null geodesic.

The next question is how the subsets of geodesics with fixed mass, previously denoted by $\NN_h^+(m)$, look like under the parameterization $\varphi_2$. In other words, how are they carved out from the full parameter space $S^{k+2}_c \times \mathbb{R}^+ $. Using definition \eqref{MassParticle} for the mass of the particle associated with a geodesic, on the one hand, and using relation \eqref{4DEnergyMomentum} for the energy associated with the particle in the frame, on the other hand, it is not difficult to recognize that under the parameterization $\varphi_2$ we have:
\beq   \label{SetGeodesicsFixedMass}
\NN_h^+(m) \ \simeq \ \Big\{ \, \big( \bv^\HH  + v^\VV , \, E  \big)  \in S^{k+2}_c \times \mathbb{R}^+  : \ E^2 \; g_K (v^\VV, v^\VV) = m^2  \, c^6\, \Big\} \ .
\eeq
Thus, when $m=0$, the mass condition is satisfied if and only if $v^\VV$ vanishes. When $m > 0$, the mass condition only allows points in the celestial sphere with non-zero $v^\VV$, and then the component $E$ in $\mathbb{R}^+$ is fully determined by the norm of $v^\VV$. So, for $m > 0$, the space $\NN_h^+(m)$ is parameterized by the sphere $S_c^{k+2}$ after the exclusion of an $S^2$ corresponding to the points with vanishing $v^\VV$. The excluded points correspond to the 4D motions at the speed of light, of course.

Combining \eqref{DefinitionAux1}, \eqref{EnergyGamma0}, and the mass condition in \eqref{SetGeodesicsFixedMass}, one can check that the inverse of the parameterization $\varphi_2$ satisfies 
\begin{align}
\varphi_2^{-1}: \  S_c^{k+2} \setminus S^2 \ &\longrightarrow \ \NN_h^+(m)  \linebr
\bv^\HH + v^\VV \ &\longmapsto \  \dot{\gamma}(0) \,=  \,  \frac{m\, c}{\sigma \, \sqrt{g_K (v^\VV, v^\VV)}} \  \Big[  \Big(  \frac{\partial}{\partial t} \, + \, \bv \Big)^\HH  + \, v^\VV  \Big] \ .  \nonumber
\end{align}

Finally, it is relevant to mention that the parameterizations $\varphi_1$ and $\varphi_2$ of $\NN_h^+$ are not canonical, in the sense that they both rely on a choice of coordinates on Minkowski space.

\section{Difficulties with geodesics on Einstein backgrounds}
\label{DiscussionDifficulties}

Throughout this paper we have studied geodesic motion on the product $P= M_4 \times K$ equipped with an arbitrary submersive metric $g_P \simeq (g_M, A, g_K)$. But what are the most appropriate background metrics? Where do they come from? A natural answer is that background metrics should be solutions of a higher-dimensional field equation. 
In Kaluza-Klein models, the most common choice is considering solutions of the Einstein equations on $P$, obtained by variation of the Einstein-Hilbert action
\begin{align} \label{HDEinsteinHilbertAction}
\mathcal{E} (\tg_P) \ &= \ \frac{1}{2\, \kappa_P} \, \int_P \, (\, R_{\tg_\PPP} \, - \,  2 \Lambda \, ) \; \vol_{\tg_\PPP} \  \linebr
&= \ \frac{1}{2\, \kappa_P} \, \int_P \, \Big[\, R_{\tg_\MMM} \, - \, 2 \Lambda \, + \, R_{\tg_\KKK} \, - \, \frac{1}{4}\, |F_A|^2 \, - \,  \frac{1}{4}\,  |\dd^A \tg_K|^2  \ + \  |\dd^A  (\vol_{\tg_\KKK})|^2 \, \Big] \,\vol_{\tg_\PPP} \, .  \nonumber
\end{align}
The last equality is just \eqref{GaugedSigmaModelAction0}. Although this is a natural choice of action, its relation with the traditional equations of 4D physics is not trivial. The description of geodesic motion on such backgrounds has a few subtleties and difficulties.

\subsection*{Dimensional reduction of the Einstein-Hilbert action}

A first difficulty is that, for a submersive metric $\tg_P$ that solves the Einstein equations on $P$, its projection $\tg_M$ to $M_4$ does not satisfy a field equation resembling the traditional 4D Einstein equation, in general. Dimensional reduction of the relevant term in $\mathcal{E} (\tg_P)$ yields
\beq
\int_P \,  R_{\tg_\MMM} \,\vol_{\tg_\PPP} \ = \  \int_M \, (\Vol_{\tg_\KKK})  \, R_{\tg_\MMM} \,  \vol_{\tg_\MMM} \ .
\eeq
So the gravity component of the 4D Lagrangian does not appear in the traditional guise $R_{\tg_\MMM} \vol_{\tg_\MMM}$, but instead appears multiplied by a scalar field that can vary across $M_4$. This means that $\tg_M$ is not the usual 4D metric of GR.

The established procedure to overcome this difficulty is to transform the 4D Lagrangian from the Jordan frame to the Einstein frame through a Weyl rescaling of $\tg_M$ \cite{Cho1990, FGN}. In practice, this means that, given a higher-dimensional metric $\tg_P \simeq (\tg_M, A, \tg_K)$, one declares that the physical metric in 4D is not $\tg_M$ --- the simple projection of $\tg_P$ to the base $ M_4$---  but, instead, is the metric $g_M$ related to it by the rescaling
\beq  \label{WeylTransformation}
 g_M \ =\  e^{2\omega} \, \tg_M    \qquad \quad {\text{with}} \quad \ \ \  e^{2\omega} \ := \  \kappa_P^{-1} \,  \kappa_M  \, \Vol_{\tg_\KKK}  \ .
\eeq
Using the standard transformation rules of the Riemannian volume form and scalar curvature under Weyl rescalings \cite{Wald, Bap}, one then calculates that
 \begin{align}  \label{DimensionalReductionWithWT}
\frac{1}{2\, \kappa_P} \, \int_P \,  R_{\tg_\MMM} \,\vol_{\tg_\PPP} \ &= \  \frac{1}{2\, \kappa_M} \, \int_M \, \frac{\kappa_M (\Vol_{\tg_\KKK}) }{\kappa_P} \, e^{-2\omega} \,  \big[ \,  R_{g_\MMM} \, + \, 6\, \DAlembert_{g_\MMM} \omega  \, - \, 6\,   |\dd  \omega |^2_{g_\MMM}  \, \big]   \,  \vol_{g_\MMM} \nonumber \linebr
&= \  \frac{1}{2\, \kappa_M} \, \int_M \, \big[ \,  R_{g_\MMM} \, - \, 6\,   |\dd  \omega |^2_{g_\MMM}  \, \big]   \,  \vol_{g_\MMM}  \ .
 \end{align}
So the kinetic term of the new metric $g_M$ appears in the precise GR form. It is accompanied by the kinetic term of the scalar field $\omega$ on $M_4$, measuring the volume of the internal space. The field equation for $g_M$ coming from the action \eqref{HDEinsteinHilbertAction} will therefore resemble the 4D Einstein equation with scalar matter and radiation.

Now consider the Yang-Mills part of $\mathcal{E} (\tg_P)$. After dimensional reduction, it will not appear with the canonical normalization of 4D physics for a general, varying internal metric $\tg_K$. When $\tg_K$ changes, its isometry group can also change, and so can the structure of massive  / massless gauge fields in the model. Thus, the structure of Yang-Mills terms in the Standard Model cannot correspond to a global property of the Kaluza-Klein model. Only to a local one, valid in regions of $M_4$ where $\tg_K$ is approximately constant and close to its vacuum value $\tg_K^\Szero$, which one assumes to be the present-time conditions. Thus, the aim after dimensional reduction is to obtain 4D Yang-Mills terms with the canonical normalization only in regions of spacetime where $\tg_K \simeq \tg_K^\Szero$. With this principle established, one observes that
 \begin{align}
\int_P \,  |F_A|^2 \ \vol_{\tg_\PPP} \ &= \ \int_M  \,  \tg_M^{\mu \nu} \, \tg_M^{\sigma \rho}\, (F^a_A)_{\mu \sigma} \;  (F^b_A)_{\nu \rho}\, \Big[ \int_K \tg_K (e_a, e_b) \, \vol_{\tg_\KKK} \,  \Big]  \, \vol_{\tg_\MMM}  \nonumber \linebr
&= \ \int_M  \,  g_M^{\mu \nu} \, g_M^{\sigma \rho}\, (F^a_A)_{\mu \sigma} \;  (F^b_A)_{\nu \rho}\, \Big[ \int_K \tg_K (e_a, e_b) \, \vol_{\tg_\KKK} \,  \Big]  \, \vol_{g_\MMM}  \nonumber \ .
 \end{align}
The last equality reflects the invariance of the 4D Yang-Mills action under Weyl rescalings of $g_M$. Therefore, in the vacuum conditions $\tg_K \simeq \tg_K^\Szero$, dimensional reduction of $\mathcal{E} (\tg_P)$ will produce terms with the canonical normalization $-\frac{1}{4} (F_A^a)^{\mu \nu} (F_A^a)_{\mu \nu}$ if and only if
\beq \label{NormalizationInternalMetric}
 \int_K \tg^\Szero_K (e_a, e_b) \, \vol_{\tg^\SSzero_\KKK} \ = \ 2\, \kappa_P \, \delta_{ab}  \ .
\eeq
This is a normalization condition for the vector fields $e_a$ on $K$. It will be important below.

\subsection*{Geodesics and rescalings of the 4D metric}

Our previous study of geodesic motion on $P$ assumed that the projection to $M_4$ of the background metric $g_P  \simeq (g_M, A, g_K)$ is the physical 4D metric. For example, section \ref{ParticlesPhysics} used that the 4D proper time of a particle following a geodesic is $c \, \frac{\dd \tau}{\dd s} = \sqrt{- g_M(\dot{\gamma}_M, \dot{\gamma}_M)}$. However, in this section we have seen that if $\tg_P$ is a solution of the Einstein equations on $P$ with varying internal volume, then its projection $\tg_M$ should not be identified with the physical 4D metric $g_M$, only with a rescaled version of it. So we have a potential problem when interpreting the geodesics of $\tg_P$. Three possible strategies to deal with it are:
\begin{itemize}
\item[1)] Restrict the action $\mathcal{E} (\tg_P)$ to metrics of fixed internal volume, satisfying the normalization condition $ \kappa_M  \, \Vol_{\tg_\KKK} = \kappa_P$. Use the solutions of the respective field equations as backgrounds for geodesics. Due to the normalization, $\tg_M$ can be identified with the physical 4D metric $g_M$ in \eqref{WeylTransformation}. The formulae in sections \ref{ParticlesPhysics} to \ref{SectionSpaceGeodesics} are unchanged.
\item[2)] Take arbitrary solutions of the Einstein equations on $P$ as backgrounds for geodesics, but rescale the identification between the physical 4D-momentum of a test particle and the projection to $M_4$ of the momentum vector on $P$.  The formulae in sections \ref{ParticlesPhysics} to \ref{SectionSpaceGeodesics} need some adjustments.
\item[3)] Declare that the physical background metrics $g_P  \simeq (g_M, A, g_K)$ should not be solutions of the Einstein equations on $P$, but, instead, should be solutions of  alternative equations that allow a direct identification of $g_M$ with the physical 4D metric. For such backgrounds, the formulae in sections \ref{ParticlesPhysics} to \ref{SectionSpaceGeodesics} will remain unchanged.
\end{itemize}
Strategy 1 is clear enough. Let us now describe the options 2 and 3 in more detail.

\subsubsection*{Option 2}

All solutions $\tg_P$ of the Einstein equations on $P$ can be background metrics. Test particles follow affine geodesics $\gamma(s)$ of these metrics. For non-normalized solutions, the physical metric on $M_4$ is related to the projection $\tg_M$ of $\tg_P$ through the rescaling
\beq
g_M \ = \ \kappa_P^{-1} \,  \kappa_M  \, (\Vol_{\tg_\KKK}) \, \tg_M \ = \ e^{2\omega} \, \tg_M  .
\eeq
The higher-dimensional momentum vector is still $\tilde{p} = \sigma \dot{\gamma}$, but the physical 4-momentum of the test particle should not be identified with $\tilde{p}_M = \sigma \dot{\gamma}_M$, the projection of $\tilde{p}$ to $M_4$. Instead, the 4-momentum $p_M$ in \eqref{4MomentumComponents} should be identified with the rescaled projection
\beq
p_M \ = \ e^{-\omega} \ \tilde{p}_M \ = \ \sigma \; e^{-\omega}  \;  \frac{\dd \tau}{\dd s} \; \frac{\dd \gamma_M}{\dd \tau} \ .
\eeq
So the 3D rest mass of the particle represented by the geodesic $\gamma (s)$ now satisfies
\begin{align}
 m(s) \; &:= \; \sigma \, e^{-\omega} \; \frac{\dd \tau}{\dd s} \ = \ \sigma \, e^{-\omega} \, {c}^{-1} \, \sqrt{- g_M (\dot{\gamma}_M, \dot{\gamma}_M)} \ = \ {c}^{-1} \,  \sqrt{- g_M (p_M, p_M)}    \nonumber  \linebr
&= \ {c}^{-1} \, \sqrt{- \tg_M (\tilde{p}_M, \tilde{p}_M)} \ = \  {c}^{-1} \, \sqrt{\tg_K (\tilde{p}^\VV,  \tilde{p}^\VV) \ - \ \tg_P(\tilde{p}, \tilde{p})} \ .
\end{align}
This generalizes \eqref{MassParticle} to the case where the physical 4D metric $g_M$ does not coincide with the projection $\tg_M$ of the metric $\tg_P$ that appears in the geodesic equation. One can check that, after the rescalings, the formula for rest mass variation remains unchanged,
\beq
c^2\ \frac{\dd}{\dd s} \; m^2(s)  \ = \ -  \; (\dd^A \tg_K)_{\dot{\gamma}_M} (\tilde{p}^\VV, \tilde{p}^\VV) \ .
\eeq
In regions where the internal geometry is constant, $\tg_K \simeq \tg_K^\Szero$, the definition of charge should now be
\beq
q_{\kil}(s) \ := \ - \,  e^{\omega_\SSzero} \; \tg_K( \kil, \tilde{p}^\VV )  \ , 
\eeq
instead of \eqref{DefinitionCharge}, in order to preserve the form \eqref{LorentzForceEquation} of the Lorentz force equation of motion. Since $ e^{2\omega_\SSzero} =  \kappa_P^{-1} \,  \kappa_M  \, \Vol_{\tg_\KKK^\SSzero}$ is constant in these regions, the charge variation formula \eqref{SimplifiedEvolutionCharge} changes only by a constant  factor $e^{\omega_\SSzero}$ on its right-hand side.

\subsubsection*{Option 3}

The simple Einstein-Hilbert functional $\mathcal{E} (g_P)$ is the most natural action to generate field equations for background metrics on $P$. However, due to the difficulties it creates upon dimensional reduction, one could consider alternative actions for submersive backgrounds  $g_P  \simeq (g_M, A, g_K)$ on $P$. A first example of alternative functional is:
\begin{align}  \label{AlternativeAction1}
\mathcal{E}' (g_P) \, &:= \ \frac{1}{2\, \kappa_P} \, \int_P \, e^{-2\omega}\, (\, R_{g_\PPP} \, - \,  2 \Lambda \, ) \; \vol_{g_\PPP} \  \linebr
&= \   \int_M \frac{1}{2\, \kappa_M}  \, ( R_{g_\MMM}  \, - \,  2 \Lambda )  \vol_{g_\MMM} \, + \,  \frac{1}{2\, \kappa_P} \int_P \big( e^{-2\omega}  R_{g_\KKK}  -  \frac{1}{4}\,  e^{-2\omega} \, |F_A|^2  - \cdots \big) \vol_{g_\PPP} \, .  \nonumber
\end{align}
Here $e^{2\omega}$ is the scalar $\kappa_P^{-1} \kappa_M \Vol_{g_\KKK}$ and we have used the second equality in \eqref{HDEinsteinHilbertAction}. As in the calculations leading to \eqref{NormalizationInternalMetric}, this functional requires the normalization condition
\beq
 \int_K g^\Szero_K (e_a, e_b) \, \vol_{g^\SSzero_\KKK} \ = \ 2\, \kappa_P \, e^{2\omega}\, \delta_{ab} \ = \ 2\, \kappa_M \,\Vol_{g^\SSzero_\KKK}\, \delta_{ab}
\eeq
to produce a canonically normalized Yang-Mills term in regions where the internal metric is close to its vacuum value, $g_K \simeq g^\Szero_K$. By construction, in the second line of \eqref{AlternativeAction1} the 4D gravity term already appears in the Einstein frame. So the projection of the background metric $g_P$ to $M_4$, denoted here $g_M$, can be directly identified with the physical 4D metric.

A second example of alternative action for background, submersive metrics $g_P  \simeq (g_M, A, g_K)$ is the deformation of the Einstein-Hilbert action $\mathcal{E}$ defined by:
\begin{align}  \label{AlternativeAction2}
\mathcal{E}'' (g_M,   A, g_K) \, &:= \; \mathcal{E} (  e^{2\omega }  \, g_M , \,   A, \,  g_K )   \ \linebr
&=  \, \frac{1}{2 \kappa_M} \int_M ( R_{g_\MMM}  -  6 \, |\dd \omega|^2 )  \vol_{g_\MMM}  +  \frac{1}{2 \kappa_P} \int_P \big( e^{4\omega}  R_{g_\KKK}  -  \frac{1}{4}\, |F_A|^2  - \cdots \big) \vol_{g_\PPP}  .  \nonumber 
\end{align}
The second equality uses \eqref{HDEinsteinHilbertAction} and \eqref{DimensionalReductionWithWT}. For this functional, \eqref{NormalizationInternalMetric} is the normalization condition necessary to obtain a canonically normalized Yang-Mills term in regions where $g_K \simeq g^\Szero_K$. Once again, the 4D gravity term already appears in the Einstein frame, so the projection $g_M $ of the background metric to $M_4$ can be directly identified with the physical 4D metric. Due to the first line in \eqref{AlternativeAction2}, the field equations for $g_P \simeq (g_M, A, g_K)$ determined by the action $\mathcal{E}''$ coincide with those obtained by substituting $\tg_M \rightarrow e^{2\omega } g_M$ and $\tg_K \rightarrow g_K$ in the Einstein equations for $\tg_P \simeq (\tg_M, A, \tg_K)$. 

Perhaps the biggest conceptual disadvantage of action \eqref{AlternativeAction2} is that it is not defined for arbitrary metrics on $P$. Only for submersive metrics that can be decomposed into triples $(g_M, A, g_K)$. So the associated Kaluza-Klein model cannot be interpreted as a simple transposition of GR to higher dimensions.

\subsection*{Electromagnetic $q/m$ ratios}

As described in section \ref{ChargeVariation}, in regions where the internal metric is constant and only an electromagnetic-like gauge field is present, the projections to $M_4$ of higher-dimensional geodesics resemble the usual Lorentz force motions of charged particles in 4D. This was the original Kaluza ``miracle" in the 5D model.

However, when the background metric comes from the Einstein-Hilbert action on $P$, the range of 4D motions projected by higher-dimensional geodesics has severe limitations. For example, in the 5D model, all causal geodesics project down to Lorentz force motions on $M_4$ with $q/m$ ratios that are much lower than the physical ratios for elementary particles \cite{GK, CE}. Spacelike geodesics on $P$ are able produce higher $q/m$ ratios \cite{DO}, but in a causal model we would prefer to avoid them.

Let us examine the problem in more detail. If we set the constant $\kappa_P$ to be $\kappa_M \, \Vol_{g_\KKK^\SSzero}$, as in \cite{GK}, the normalization condition \eqref{NormalizationInternalMetric} implies that
\beq    \label{WrongNormalizationInternalMetric}
\frac{1}{\Vol_{g_\KKK^\SSzero}} \int_K g^\Szero_K (e_a, e_b) \, \vol_{g^\SSzero_\KKK} \ = \ 2\, \kappa_M \, \delta_{ab}  \ .
\eeq
In the 5D model, with $g_K^\Szero$ being the round metric on $K= S^1$ and $e_a = \kil$ being its standard Killing field, the function $g^\Szero_K (\kil, \kil)$ is constant along the circle $K$. So the normalization condition for $\kil$ reduces to
\beq
 g^\Szero_K (\kil, \kil) \ = \ 2\, \kappa_M   \ .
\eeq
This creates a problem when comparing to the Lorentz force motions coming from the geodesic picture. For instance, formula \eqref{MassChargeInequality} says that for non-trivial geodesics on $M_4 \times S^1$ we must have 
\beq
\frac{q_\kil}{m}  \  \leq \ c \, \sqrt{g^\Szero_K (\kil, \kil)} \ = \ c \, \sqrt{2\, \kappa_M} \ ,
\eeq
while the charge-to-mass ratio of the electron is many orders of magnitude above this value. This is the low $q/m$ ratios problem of the 5D geodesic model \cite{GK, CE, DO}.

A first observation is that it is only for unidimensional fibres that \eqref{WrongNormalizationInternalMetric} actually fixes the value of $ g^\Szero_K (\kil, \kil)$. In higher dimensions, only the average of $g^\Szero_K (\kil, \kil)$ over the manifold $(K, g^\Szero_K)$ is fixed. And, in principle, that average can be significantly different from the point values that appear in the geodesic equation. So a geodesic passing through points in $P$ where the norm $g^\Szero_K (\kil, \kil)$ is higher than average can project down to a 4D Lorentz force motion with a $q/m$ ratio higher than $c  \sqrt{2\, \kappa_M}$. If the geometry of $(K, g^\Szero_K)$ is very heterogeneous --- for instance if $ g^\Szero_K$ has a singularity --- the $q/m$ ratios of the projected 4D motions can be much higher than $c  \sqrt{2\, \kappa_M}$.

A second observation is that the normalization conditions \eqref{NormalizationInternalMetric} and \eqref{WrongNormalizationInternalMetric} assume that the background metric for geodesics comes from the simple Einstein-Hilbert action on $P$. For example, for the alternative action \eqref{AlternativeAction2}, which is directly in the Einstein frame, the normalization \eqref{NormalizationInternalMetric} is still necessary but we no longer have to fix the value of $\kappa_P$ to be $\kappa_M  \, \Vol_{\tg_\KKK^\SSzero}$, in order to obtain a canonically normalized Yang-Mills term.  So \eqref{WrongNormalizationInternalMetric} is no longer valid. Thus, choosing a sufficiently big $\kappa_P$, the higher bound on the $q/m$ ratios imposed by the normalization of the Killing field $\kil$ no longer excludes the charge-to-mass ratios of elementary particles. In other words, the action \eqref{AlternativeAction2} generates backgrounds without $q/m$ ratios problems.
More generally, in a realistic model the Einstein-Hilbert action on $P$ may need higher order corrections to describe the appearance of physical particles with masses at very different scales. Such higher order terms can affect the normalization condition for the Killing fields, and the corrected condition may be less problematic than  \eqref{NormalizationInternalMetric}.

\vspace*{1.5cm}

\section*{Acknowledgements}

\vspace*{-.1cm}

It is a pleasure to thank Nick Manton and Nuno Romão for helpful comments on an earlier version of this paper.

\newpage

\begin{appendices}

\section{Auxiliary results and a remark on notation}
\label{Auxiliary results}

\begin{remark} The notation used in this paper differs significantly from the conventional notation in the literature about Riemannian submersions. The modification is necessary because the latter clashes with the traditional notation in physics. Namely, the tensor called $A$ in \cite{ONeill1, ONeill, Besse, FIP} is essentially what we call $F_A$ here, since it represents the physical gauge fields strength. More precisely, the relation is
\beq \label{TranslationTensors}
F_A (X, Y) \ = \ 2\ \tilde{A}_{X^\HH} Y^\HH \ = \ [X^\HH, \, Y^\HH]^\VV \ ,
\eeq
where $X, Y$ are vector fields on the base $M$; the symbols $X ^\HH$ and $Y^\HH$ denote their lifts as basic vector fields on $P$; we have denoted by $\tilde{A}$ the tensor called $A$ in \cite{ONeill1, ONeill, Besse, FIP}; and the last equality is a standard result in Riemannian submersions.
The tensor called $T$ in \cite{ONeill1, ONeill, Besse, FIP} is what we call here $\dd^A g_K$ (or $S$ less often). This avoids confusion with torsion or with the energy-momentum tensor. It also emphasizes its physical role as a covariant derivative of Higgs-like fields. The precise relation is
\beq \label{TranslationTensors2}
(\dd^A g_K)_X  (U, V)  \ = \  (\Lie_{X^\HH} g_P) (U, V) \ = \ -\, 2\,  g_P (T_U V, \, X^\HH) \ = \  -\, 2\, g_P (\nabla_U V, \, X^\HH  ) \ ,
\eeq
where $X$ is a vector field on $M$ and $U, V$ are vertical vector fields on $P$. The last equality is the definition of the tensor $T$ in \cite{ONeill1, ONeill, Besse, FIP} in terms of the Levi-Civita connection on $P$. The first two equalities are derived in sections 2.3 and 2.5 of \cite{Bap}.
\end{remark}

\noindent
Now the auxiliary results. Classical work of Ehresmann and Hermann \cite{Ehresmann, Hermann} implies that:
\begin{proposition}
Let $\pi: (P, g_P) \rightarrow (M, g_M)$ be a Riemannian submersion. If the metric $g_P$ is complete, then $\pi$ defines a locally trivial fibration. In this case, given a path $\gamma_M(s)$ on $M$ starting at a point $x$ and given a point $p$ in $P$ such that $\pi (p)=x$, there exist a unique path $\gamma (s)$ on $P$ starting at $p$ such that $\gamma_M = \pi \circ \gamma$ and the derivatives $\dot{\gamma}$ are always horizontal vectors on $P$. This $\gamma$ is called a horizontal lift of $\gamma_M$.
\end{proposition}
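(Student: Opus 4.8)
The plan is to treat the horizontal distribution $\HH = (\ker d\pi)^\perp \subset TP$ as an Ehresmann connection and to reduce the whole statement to the existence and uniqueness of \emph{global} horizontal lifts; once that is established, local triviality follows by a routine construction. So I would first prove: for every path $\gamma_M:[0,L]\to M$ and every $p\in\pi^{-1}(\gamma_M(0))$ there is a unique path $\gamma:[0,L]\to P$ with $\pi\circ\gamma=\gamma_M$, $\gamma(0)=p$, and $\dot\gamma(s)\in\HH_{\gamma(s)}$ for all $s$.

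For local solvability I would work over a neighbourhood on which $\pi$ is a trivial submersion, $\pi^{-1}(U)\cong U\times W$ with $\pi$ the projection; there $\HH$ is the graph of a smooth bundle map (a local connection one-form), so the conditions $\pi\circ\gamma=\gamma_M$ and $\dot\gamma$ horizontal together become a first-order ODE for the $W$-component of $\gamma$ with smooth right-hand side depending on $s$ only through $\gamma_M$. Picard--Lindel\"of then gives a unique solution on a maximal subinterval $[0,b)\subseteq[0,L]$, and the same uniqueness for ODEs gives uniqueness of the lift (two lifts with the same initial point solve the same equation). Covering $\gamma_M([0,L])$ by finitely many such trivializing charts handles the fact that $\gamma_M$ may leave any single neighbourhood.

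The heart of the argument, and the step I expect to be the main obstacle, is showing the maximal lift is defined on all of $[0,L]$, which is exactly where completeness of $g_P$ enters. Since $d\pi$ restricts to a linear isometry $\HH_q\to T_{\pi(q)}M$ (the defining property of a Riemannian submersion), any horizontal lift satisfies $|\dot\gamma(s)|_{g_P}=|\dot\gamma_M(s)|_{g_M}$, so the $g_P$-length of $\gamma|_{[0,b)}$ equals the finite $g_M$-length of $\gamma_M|_{[0,b)}$. Hence $\gamma(s)$ is $g_P$-Cauchy as $s\uparrow b$ (Riemannian distance is bounded by arclength), and completeness forces $\gamma(s)\to p_b$ for some $p_b\in P$, with $\pi(p_b)=\gamma_M(b)$ by continuity. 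Restarting the local ODE at $p_b$ then extends the lift past $b$ unless $b=L$, contradicting maximality; the only subtlety is checking the concatenated curve is still a genuine horizontal lift at the junction $s=b$, which holds because horizontality is a pointwise closed condition and the two pieces match with matching derivatives.

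Finally, for local triviality I would fix $x_0\in M$ and a totally normal (radially foliated) neighbourhood $U$, set $F=\pi^{-1}(x_0)$, and define $\Phi:U\times F\to\pi^{-1}(U)$ by sending $(x,f)$ to the endpoint of the unique horizontal lift, starting at $f$, of the radial path from $x_0$ to $x$. Smooth dependence of ODE solutions on initial data and parameters makes $\Phi$ smooth; $\pi\circ\Phi$ is the projection by construction; and lifting the reversed radial paths provides a smooth inverse, horizontal transport along any path being a diffeomorphism between the fibres over its endpoints. This exhibits $\pi$ as a locally trivial fibration with typical fibre $F$ (which is $K$ in our application). Everything outside the completeness/length estimate is standard ODE theory and bookkeeping.
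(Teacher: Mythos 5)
The paper itself gives no proof of this proposition: it is stated as a classical result and attributed to Ehresmann and Hermann, so there is no in-paper argument to compare yours against. Your write-up is a correct reconstruction of the standard Hermann argument, with the right division of labour: local existence and uniqueness of horizontal lifts from Picard--Lindel\"{o}f, global existence from the fact that $\pi_\ast$ restricts to an isometry $\HH_q \to T_{\pi(q)}M$ (so the partial lift has finite length, is Cauchy, and converges by completeness of $g_P$), and local trivializations by horizontal transport along radial paths, smooth by smooth dependence of ODE solutions on data. Two small points deserve care but are not gaps. First, writing $\pi^{-1}(U)\cong U\times W$ in the local step asserts exactly the local triviality you are trying to prove; the rank theorem only gives each point of $P$ a neighbourhood (generally a proper subset of $\pi^{-1}(U)$) on which $\pi$ is equivalent to a projection $U\times W\to U$, and that is all the local ODE needs --- the danger that the lift exits such a neighbourhood in the fibre direction before $\gamma_M$ leaves $U$ is precisely what your completeness/length estimate excludes, so the notation overstates the local model but the logic is sound. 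Second, at the junction $s=b$ the concatenation is $C^1$ and horizontal because $\dot{\gamma}(s)$ is by construction the horizontal lift of $\dot{\gamma}_M(s)$ at $\gamma(s)$, hence converges as $s\uparrow b$ to the horizontal lift of $\dot{\gamma}_M(b)$ at $p_b$, which is the initial velocity of the restarted solution; it is worth saying this explicitly rather than appealing only to horizontality being a closed condition. With those clarifications your proof is complete and is, in substance, the proof in the references the paper cites.
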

\noindent
Regarding horizontal geodesics, O'Neill has the following result:
\begin{proposition}[\!\cite{ONeill}]
Let $\pi: (P, g_P) \rightarrow (M, g_M)$ be a Riemannian submersion and let $\gamma (s)$ be a geodesic on $P$. If the derivative $\dot{\gamma}$ is a horizontal vector at some point, then it is always horizontal. Moreover, in this case the projection $\gamma_M (s) = \pi \circ \gamma (s)$ is a geodesic on $(M, g_M)$.
\end{proposition}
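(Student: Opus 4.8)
The plan is to establish the two assertions separately, using the split geodesic equation \eqref{DecompositionGeodesicEquation} together with the evolution formula \eqref{NormVerticalComponent} for the norm of the vertical component. The first assertion — horizontality propagates — is the substantive one; the second is then almost formal.

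\emph{Propagation of horizontality.} I would set $f(s) := g_P\big(\dot{\gamma}^\VV(s),\, \dot{\gamma}^\VV(s)\big)$. Since $\dot{\gamma}^\VV$ is vertical and $g_P$ restricted to vertical vectors is the Riemannian metric $g_K$ (see \eqref{MetricDecomposition}), one has $f(s) = g_K(\dot{\gamma}^\VV,\dot{\gamma}^\VV) \geq 0$, with $f(s)=0$ exactly when $\dot{\gamma}(s)$ is horizontal. Because $\gamma$ is a geodesic, \eqref{NormVerticalComponent} gives
\[
f'(s) \ = \ 2\, g_P\big(S_{\dot{\gamma}^\VV}\, \dot{\gamma}^\VV,\, \dot{\gamma}^\HH\big)\ ,
\]
an expression that is quadratic in $\dot{\gamma}^\VV$ and linear in $\dot{\gamma}^\HH$. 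Over any compact parameter interval the image of $\gamma$ is compact, so the second fundamental form $S$ and the vector $\dot{\gamma}^\HH$ are bounded there; comparing $g_K$ with a fixed auxiliary Riemannian metric on that compact set, one obtains an estimate $|f'(s)| \leq C\, f(s)$ on the interval. If $\dot{\gamma}(s_0)$ is horizontal, i.e. $f(s_0)=0$, then Gr\"onwall's inequality applied on either side of $s_0$ forces $f\equiv 0$ on the interval, hence on the whole domain of $\gamma$. So $\dot{\gamma}^\VV \equiv 0$ and $\dot{\gamma}$ is horizontal everywhere.

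\emph{The projection is a geodesic.} Granting $\dot{\gamma}^\VV\equiv 0$, I would feed this into the second equation of \eqref{DecompositionGeodesicEquation}. The term $g_P(S_{\dot{\gamma}^\VV}\dot{\gamma}^\VV, Z)$ vanishes; so does $g_P(e_a,\dot{\gamma}) = g_P(e_a,\dot{\gamma}^\VV)$ because each $e_a$ is vertical and $\dot{\gamma}^\HH$ is $g_P$-orthogonal to it; and the left-hand side $g_P(\nabla_{\dot{\gamma}}\dot{\gamma}, Z)$ is zero since $\gamma$ is a geodesic. What is left is $g_M\big(\nabla^M_{\dot{\gamma}_M}\,\dot{\gamma}_M,\, \pi_\ast Z\big)=0$ for every $Z\in TP$. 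As $\pi_\ast$ maps $T_{\gamma(s)}P$ onto $T_{\gamma_M(s)}M_4$, this means $g_M\big(\nabla^M_{\dot{\gamma}_M}\,\dot{\gamma}_M,\, X\big)=0$ for all $X\in TM_4$, hence $\nabla^M_{\dot{\gamma}_M}\,\dot{\gamma}_M = 0$; that is, $\gamma_M = \pi\circ\gamma$ is a geodesic of $(M_4,g_M)$. (This last step is just the computation of the ``Horizontal geodesics'' subsection applied to the now-established horizontal curve.)

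\emph{Main obstacle.} The only genuinely non-formal point is the propagation of horizontality: turning the identity for $f'$ into a differential inequality and invoking ODE uniqueness. Some care is needed because $g_P$ is Lorentzian, so one cannot argue via nonnegativity of $|\dot{\gamma}^\HH|^2$; the argument works precisely because the quantity being controlled, $f = g_K(\dot{\gamma}^\VV,\dot{\gamma}^\VV)$, is the squared length of a vector tangent to the Riemannian fibres, and the right-hand side of \eqref{NormVerticalComponent} is manifestly bounded by $f$ times locally bounded quantities.
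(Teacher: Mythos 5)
Your argument is correct, but it is worth noting that the paper itself does not prove this proposition: it is quoted from O'Neill, and the only related derivation in the text is the ``Horizontal geodesics'' subsection of section \ref{GeneralGeodesicsSubmersions}, which establishes Proposition \ref{HorizontalGeodesics} (horizontal geodesics of $g_P$ are exactly the horizontal lifts of geodesics of $g_M$) but takes horizontality along the whole curve as a hypothesis. Your second step reproduces that computation essentially verbatim, so the genuinely new content of your proposal is the propagation-of-horizontality argument, and there your route differs from the classical one. You control the vertical energy $f(s)=g_K(\dot{\gamma}^\VV,\dot{\gamma}^\VV)$ via \eqref{NormVerticalComponent}, observe that the right-hand side is quadratic in $\dot{\gamma}^\VV$ with coefficients ($S$ and $\dot{\gamma}^\HH$) that are bounded on compact parameter intervals, and close with Gr\"onwall; this is sound, and your remark that the argument hinges on $f$ being a \emph{Riemannian} fibre norm (so that $f\geq 0$ and $|f'|\leq C f$) correctly identifies the one place where the Lorentzian signature of $g_P$ could have caused trouble. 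The standard alternative, which is what Proposition \ref{HorizontalGeodesics} is set up to deliver, is shorter: take the $g_M$-geodesic $\beta$ with initial data $\pi(\gamma(s_0))$ and $\pi_\ast\dot{\gamma}(s_0)$, form its horizontal lift through $\gamma(s_0)$, note that by Proposition \ref{HorizontalGeodesics} this lift is a geodesic of $g_P$ with the same initial position and velocity as $\gamma$, and conclude $\gamma$ equals it by uniqueness of geodesics. That version gets both assertions at once from ODE uniqueness for the geodesic equation, whereas yours buys the same conclusion from ODE uniqueness for the scalar inequality $|f'|\leq Cf$; both are legitimate, and yours has the mild advantage of not needing the existence of horizontal lifts (the Ehresmann--Hermann proposition quoted just above the statement in appendix \ref{Auxiliary results}).
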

\noindent
Using basic properties of Riemannian submersions (e.g. see \cite[sec. 9]{Besse}), we show that:
\begin{lemma}   \label{Lemma1Appendix}
Let $\pi: P \rightarrow M$ be a Riemannian submersion with metric $g_P$ determined by the equivalent data $(g_M, g_K, A)$. Let $U$ and $V$ be vertical vector fields and let $Z$ be a basic vector field on $P$. Then
\begin{align} \label{VerticalLieDerivatives}
(\Lie_V g_P) (U, U) \ &= \ (\Lie_V g_K) (U, U) \qquad \qquad \qquad  (\Lie_V g_P) (Z, Z) \ = \ 0   \nonumber  \linebr
(\Lie_V g_P) (Z, U) \ &= \  g_K ([Z, V], U) \ .
\end{align}
In particular, $V$ is Killing with respect to $g_P$ if and only if the restriction of $V$ to each fibre is Killing with respect to $g_K$ and, additionally, the Lie bracket $[V, Z]$ vanishes for every basic vector field $Z$ on $P$.
\end{lemma}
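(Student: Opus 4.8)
The plan is to compute the three stated components of the symmetric tensor $\Lie_V g_P$ directly from the Cartan formula
\[
(\Lie_V g_P)(X,Y) \ = \ V\big(g_P(X,Y)\big) \; - \; g_P([V,X],Y) \; - \; g_P(X,[V,Y]) \ ,
\]
together with the metric decomposition \eqref{MetricDecomposition} and two elementary facts about Lie brackets in a Riemannian submersion. First, the vertical distribution $\VV = \ker \pi_\ast$ is integrable (its leaves are the fibres), so the bracket $[V,U]$ of two vertical fields is again vertical. Second, for a basic field $Z$ and a vertical field $V$ one has $\pi_\ast[V,Z] = [\pi_\ast V, \pi_\ast Z] = 0$, so $[V,Z]$ is vertical as well. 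Both facts are what make the computation collapse.

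For the purely vertical term: $g_P(U,U)=g_K(U,U)$ by the first line of \eqref{MetricDecomposition}, and $g_P([V,U],U)=g_K([V,U],U)$ since $[V,U]$ is vertical; as $V$, $U$ and $[V,U]$ are all tangent to the fibres, the Cartan formula reduces on each fibre to the one for $g_K$, giving $(\Lie_V g_P)(U,U)=(\Lie_V g_K)(U,U)$. For the basic term: $g_P(Z,Z)=g_M(\pi_\ast Z,\pi_\ast Z)\circ\pi$ is a pullback from $M$, hence constant along fibres, so $V\big(g_P(Z,Z)\big)=0$; and $g_P([V,Z],Z)=0$ because $[V,Z]$ is vertical and $Z$ is horizontal, whence $(\Lie_V g_P)(Z,Z)=0$. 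For the mixed term: $g_P(Z,U)=0$ identically, so $V\big(g_P(Z,U)\big)=0$; also $g_P(Z,[V,U])=0$ since $[V,U]$ is vertical, while $g_P([V,Z],U)=g_K([V,Z],U)$ since $[V,Z]$ is vertical. Hence $(\Lie_V g_P)(Z,U)=-\,g_K([V,Z],U)=g_K([Z,V],U)$, as claimed.

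For the characterization of Killing fields, I would use that $\Lie_V g_P$ is a $(0,2)$-tensor, so it is determined by its values on pairs drawn from basic and vertical fields, since at each point every horizontal vector is the value of some basic field and every vertical vector is the value of some vertical field. Polarizing the first two identities --- a sum of basic fields is basic, a sum of vertical fields is vertical --- shows that $\Lie_V g_P$ vanishes on $\HH\times\HH$ automatically and agrees on $\VV\times\VV$ with the fibrewise tensor $\Lie_{V}\, g_K$. Thus $\Lie_V g_P=0$ if and only if, in addition, $g_K([Z,V],U)=0$ for all basic $Z$ and vertical $U$; since $[Z,V]$ is vertical and $g_K$ is nondegenerate on $\VV$, this is equivalent to $[V,Z]=0$ for every basic $Z$. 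Combined with the condition that $V$ restrict to a Killing field of $g_K$ on each fibre, this is exactly the stated equivalence. There is no deep obstacle here: the only point requiring care is the bookkeeping --- keeping track that the two collapsing facts genuinely need $Z$ to be \emph{basic} rather than merely horizontal, and recovering the full tensor $\Lie_V g_P$ from just the three diagonal components via polarization and tensoriality.
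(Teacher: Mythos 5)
Your proof is correct and follows essentially the same route as the paper: both compute the three components via the Cartan formula for $\Lie_V g_P$, using that $[V,U]$ and $[V,Z]$ are vertical and that $g_P(Z,Z)$ is a pullback from $M$, and both deduce the Killing characterization from nondegeneracy of $g_K$ on the vertical distribution. Your added remarks on polarization and on recovering the full tensor from basic/vertical pairs merely make explicit what the paper leaves implicit.
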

\begin{proof}
The horizontal basic field $Z$ is $\pi$-related to a well-defined vector field $\pi_\ast Z$ on the base $M$. Since $V$ is vertical, we have $\pi_\ast [V, Z] = [\pi_\ast V, \pi_\ast Z] = 0$, and the bracket $[V, Z]$ is also vertical. So by definition of Lie derivative:
\[
(\Lie_V g_P) (Z, Z) \ = \ \Lie_V ( \, g_P(Z, Z) \, ) - 2 g_P ([V, Z], Z) \ = \ \Lie_V ( \, g_M(\pi_\ast  Z, \pi_\ast Z) \, ) \ =\  0 \ .
\]
Using that $U$ and $[V, U]$ are both vertical, so are both orthogonal to $Z$, we have
\[
(\Lie_V g_P) (Z, U) \ = \ \Lie_V ( \, g_P(Z, U) \, ) - g_P ([V, Z], U) -  g_P (Z, [V, U]) \ = \   g_P ([Z, V], U)  \ .
\] 
Finally, since $U$, $V$ and $[U,V]$ are all vertical and $g_K$ just denotes the restriction of $g_P$ to the fibres, it is clear that
\begin{align}
(\Lie_V g_P) (U, U) \ &= \ \Lie_V ( \, g_P(U, U) \, ) - 2\, g_P ([V, U], U) \linebr
&= \   \Lie_V ( \, g_K(U, U) \, ) - 2\, g_K ([V, U], U)  \ = \ (\Lie_V g_K) (U, U) \ . \nonumber
\end{align}
Now suppose that $\Lie_V g_P$ vanishes. Then by \eqref{VerticalLieDerivatives} we get that $(\Lie_V g_K) (U, U)$ vanishes for all $U$. Since $\Lie_V g_K$ is symmetric in both entries, it must also vanish. Moreover, also by \eqref{VerticalLieDerivatives}, the vanishing of $(\Lie_V g_P) (Z, U)$ for all $U$ and the non-degeneracy of $g_K$ imply that $[Z, V]$ vanishes for all basic $Z$. This confirms the ``only if'' part of the last statement, while the converse is clear.
\end{proof}
\begin{lemma}  \label{Lemma2Appendix}
Let $\pi: P \rightarrow M$ be a Riemannian submersion with metric $g_P$ determined by the equivalent data $(g_M, g_K, A)$. Let $Y$ and $Z$ be basic vector fields and let $U$ be a vertical vector field on $P$. Then
\begin{align} 
(\Lie_Y g_P) (Z, Z) \ &= \ \pi^\ast \left[ (\Lie_{\pi_\ast Y}\, g_M) (\pi_\ast Z, \pi_\ast Z)   \right]  \qquad \qquad   (\Lie_Y g_P) (U, U) \ = \  (\dd^A g_K)_{\pi_\ast Y} (U, U)  \nonumber  \linebr
(\Lie_Y g_P) (Z, U) \ &= \  g_P \left(  F_A(\pi_\ast Y, \pi_\ast Z) , \, U \right) \ .
\end{align}
In particular, $Y$ is Killing with respect to $g_P$ if and only if the projection $\pi_\ast Y$ is Killing with respect to $g_M$, the covariant derivative $(\dd^A g_K)_{\pi_\ast Y}$ vanishes as a tensor on $P$ and, additionally, the contraction $F_A(\pi_\ast Y, \cdot)$ vanishes as a one-form on $M$.
\end{lemma}
\begin{proof}
By definition, basic vector fields on $P$ are horizontal and hence orthogonal to vertical vector fields. Moreover, the Lie bracket of a basic field with a vertical field is always vertical on $P$, hence orthogonal to $Z$. So we have that
\begin{align}
(\Lie_Y g_P) (Z, U) \ &= \ \Lie_Y ( \, g_P(Z, U) \, ) \, - \,  g_P ([Y, Z], U)  \,- \, g_P (Z, [Y, U]) \ \nonumber \linebr
&=\  - \,  g_P ([Y, Z], U) \ = \     g_P \left(  F_A(\pi_\ast Y, \pi_\ast Z) , \, U \right)\ ,
\end{align}
where the last equality follows from \eqref{TranslationTensors}.

A basic vector field on $P$ is $\pi$-related to a unique, well-defined vector field on the base $M$. So we have that $\pi_\ast [Y, Z] = [\pi_\ast Y, \pi_\ast Z]$ as vector fields on the base $M$. Then 
\begin{align}
(\Lie_Y g_P) (Z, Z) \ &= \ \Lie_Y ( \, g_P(Z, Z) \, ) \, - \, 2\,  g_P ([Y, Z], Z)   \nonumber \linebr
&=\  \Lie_Y \,  \pi^\ast ( \, g_M (\pi_\ast Z, \pi_\ast  Z) \, ) \, - \, 2\,  g_P ([Y, Z]^\HH , Z)  \nonumber \linebr
&=\   \pi^\ast \, \Lie_{\pi_\ast Y} ( \, g_M (\pi_\ast Z, \pi_\ast  Z) \, ) \, - \, 2\,  \,  \pi^\ast\, \left[ g_M (\pi_\ast  [Y, Z] ,\pi_\ast  Z) \right] \nonumber \linebr
&=\   \pi^\ast  \left[  (\Lie_{\pi_\ast Y} g_M ) (\pi_\ast Z, \pi_\ast  Z) \right] \ 
\end{align}
as functions on $P$, as desired. The last identity
\[
(\Lie_Y g_P) (U, U) \ = \  (\dd^A g_K)_{\pi_\ast Y} (U, U)  
\]
is tautological. It comes from the definition of covariant derivative of $g_K$, as explained in \eqref{TranslationTensors} and \cite{Bap}.
Finally, the necessary and sufficient condition for $Y$ to be Killing with respect to $g_P$ is clear from the formulae above, considering that $Z$ can be any basic field and $U$ can be any vertical field on $P$. 
\end{proof}

\noindent
The following observation, stated in \cite[sec. 9F]{Besse}, is a consequence of results in \cite{Hermann, ONeill1}.
\begin{proposition}
Let the metric $g_P \simeq (g_M, A, g_K)$ define a Riemannian submersion on $P$ with totally geodesic fibres, i.e. with $\dd^A g_K = 0$. Then the curvature form $F_A(X, Y)$ has values in the Killing vector fields of $g_K$ for every $X, Y \in TM$.
\end{proposition}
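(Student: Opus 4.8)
The plan is to prove that the restriction of $F_A(X,Y)$ to each fibre is a Killing field of $g_K$, i.e.\ that $\Lie_{F_A(X,Y)}\,g_K$ vanishes on vertical vectors. Since $F_A$ is $C^\infty(M)$-bilinear and antisymmetric in $X,Y$ — it is a genuine tensor on $M$ — it suffices to treat the case in which $X=\partial/\partial x^i$ and $Y=\partial/\partial x^j$ are coordinate vector fields on an open set of $M$, so that $[X,Y]=0$. A short computation from the definition of $F_A$ in section \ref{NullCurves}, using $X^\HH=X+A^a(X)\,e_a$ and that $A^a(X)$ is constant along the fibres, then gives $F_A(X,Y)=[X^\HH,Y^\HH]$, a vertical vector field.

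The key step is the identity
\beq
(\Lie_{X^\HH}\,g_K)(U,V)\ =\ (\dd^A g_K)_X(U,V)
\eeq
for all vertical $U,V$, valid for every vector field $X$ on $M$. Indeed, writing $X^\HH=X+\sum_a A^a(X)\,e_a$ and expanding, the piece $\Lie_X g_K$ is the first term of \eqref{DefinitionCovariantDerivative}, while $\Lie_{A^a(X)e_a}g_K$ contributes $A^a(X)\,\Lie_{e_a}g_K$ together with terms proportional to $U(A^a(X))$ and $V(A^a(X))$, both of which vanish since $A^a(X)$ is constant along $K$ and $U,V$ are vertical. Because $X^\HH$ is $\pi$-related to $X$, its flow preserves the vertical sub-bundle $\VV$, so $\Lie_{X^\HH}$ acts on sections of $\Sym^2\VV^*$; under the hypothesis $\dd^A g_K=0$ the identity above shows that $\Lie_{X^\HH}g_K=0$ as such a section, for \emph{every} vector field $X$ on $M$. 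Geometrically this is the infinitesimal statement that horizontal transport carries fibres isometrically onto one another.

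The conclusion is then immediate. Using $\Lie_{[Z,W]}=\Lie_Z\Lie_W-\Lie_W\Lie_Z$ on tensors with $Z=X^\HH$, $W=Y^\HH$,
\beq
\Lie_{F_A(X,Y)}\,g_K\ =\ \Lie_{[X^\HH,Y^\HH]}\,g_K\ =\ \Lie_{X^\HH}\!\bigl(\Lie_{Y^\HH}g_K\bigr)-\Lie_{Y^\HH}\!\bigl(\Lie_{X^\HH}g_K\bigr)\ =\ 0 ,
\eeq
and restricting to a fibre yields the claim. For non-commuting $X,Y$ one uses instead $F_A(X,Y)=[X^\HH,Y^\HH]-[X,Y]^\HH$ and cancels the extra term against $\Lie_{[X,Y]^\HH}g_K=(\dd^A g_K)_{[X,Y]}=0$.

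The main (rather mild) obstacle is a matter of bookkeeping in the key identity: one must consistently regard $g_K$ as a section of $\Sym^2\VV^*$ over all of $P$ rather than as a metric on a single fibre, check that $\Lie_{X^\HH}$ really preserves $\VV$, and confirm that the $U(A^a(X))$-type terms drop out; the conceptual point is simply to work with $\Lie_{X^\HH}$ and the commutator formula instead of grinding through O'Neill-tensor identities. Alternatively, one can bypass the pointwise computation altogether by invoking \cite{Hermann, ONeill1}: when $\dd^A g_K=0$ horizontal parallel transport is an isometry between fibres, hence the restricted holonomy of $\HH$ acts by isometries of $g_K$, and $F_A(X,Y)$ — which generates this holonomy infinitesimally — must lie in its Lie algebra, the algebra of Killing fields of $g_K$.
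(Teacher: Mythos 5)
Your proof is correct. Note that the paper itself offers no proof of this proposition: it is stated in the appendix as a quotation from Besse (sec.\ 9F), ``a consequence of results in'' Hermann and O'Neill, so there is nothing internal to compare against line by line. What the cited literature does is essentially your closing alternative: Hermann's theorem says that when the fibres are totally geodesic, horizontal lifts of curves in $M$ carry fibres isometrically onto one another, so the holonomy of $\HH$ acts by isometries of $g_K$ and its infinitesimal generator $F_A(X,Y)$ lands in the Killing algebra. Your main argument is the infinitesimal version of that statement, made self-contained: the identity $(\Lie_{X^\HH} g_K)(U,V) = (\dd^A g_K)_X(U,V)$ is consistent with the paper's \eqref{DefinitionCovariantDerivative} and \eqref{IdentitySecondFundForm} (via $2\,g_P(S_UV,Z) = -(\Lie_Z g_P)(U,V)$ for basic $Z$), the reduction to commuting $X,Y$ by tensoriality of $F_A$ is legitimate, the formula $F_A(X,Y)=[X^\HH,Y^\HH]-[X,Y]^\HH$ checks out against the paper's definition of $F_A$ and \eqref{DefinitionHorizontalDistribution}, and the bookkeeping points you flag ($\Lie_{X^\HH}$ preserves $\VV$ because $X^\HH$ is $\pi$-related to $X$; the $U(A^a(X))$ terms vanish because $A^a(X)$ is fibrewise constant) are exactly the ones that need checking and are all fine. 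What your route buys is a short, coordinate-free derivation that uses only objects already defined in the paper, rather than importing the holonomy machinery; what the holonomy route buys is that it works for general Riemannian submersions without the product structure $P = M_4\times K$ that your explicit formula for $X^\HH$ quietly uses.
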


\noindent
In section \ref{MasslessGaugeFields} we use the following standard result.
\begin{lemma}
\label{BasisChoice}
Let $\{ u_j \}$ be a basis of the space of Killing vector fields on a compact Riemannian manifold $(K, g_K)$. Define the structure constants with respect to this basis by the relations $[u_i, u_j] = f^l_{ij}\,  u_l$. If the basis is orthonormal with respect to the $L^2$-inner product on $(K, g_K)$, then the $f^l_{ij}$ are totally anti-symmetric in their three indices.
\end{lemma}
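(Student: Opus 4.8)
The plan is to write the structure constants as $L^2$-inner products and then exploit the fact that the flow of a Killing field preserves both the metric and the volume form of $(K,g_K)$. Write $\langle X, Y\rangle := \int_K g_K(X,Y)\,\vol_{g_K}$ for the $L^2$-inner product on vector fields. Since the basis $\{u_j\}$ is $L^2$-orthonormal, expanding $[u_i,u_j] = f^m_{ij}\, u_m$ gives
\[
f^l_{ij} \ = \ \langle\, [u_i,u_j],\, u_l\,\rangle\ .
\]
Antisymmetry under $i\leftrightarrow j$ is immediate from $[u_i,u_j]=-[u_j,u_i]$. It therefore suffices to establish the single additional relation $f^l_{ij} = -\, f^j_{il}$, since the transpositions $(i\,j)$ and $(j\,l)$ generate the symmetric group on the three index slots and each then contributes a sign $-1$, so that $f$ transforms by the sign character and is totally antisymmetric.

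To prove that relation, first I would use $[u_i, u_j] = \Lie_{u_i} u_j$ together with the Killing condition $\Lie_{u_i} g_K = 0$. The Leibniz rule for the Lie derivative applied to the function $g_K(u_j, u_l)$ on $K$ reads
\[
u_i\big(g_K(u_j, u_l)\big) \ = \ (\Lie_{u_i} g_K)(u_j, u_l) \ + \ g_K([u_i, u_j], u_l) \ + \ g_K(u_j, [u_i, u_l])\ ,
\]
and the first term on the right-hand side vanishes because $u_i$ is Killing.

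Next I would integrate this identity over the compact manifold $K$ against $\vol_{g_K}$. A Killing field is divergence-free (its covariant derivative is skew-symmetric, hence trace-free; equivalently, $\Lie_{u_i}\vol_{g_K}=0$), so $\int_K u_i(h)\,\vol_{g_K} = \int_K \mathrm{div}(h\,u_i)\,\vol_{g_K} = 0$ for every smooth function $h$, by the divergence theorem and the absence of boundary. Taking $h = g_K(u_j, u_l)$ gives
\[
0 \ = \ \langle [u_i, u_j], u_l\rangle \ + \ \langle u_j, [u_i, u_l]\rangle \ = \ f^l_{ij} \ + \ f^j_{il}\ ,
\]
which is exactly $f^l_{ij} = -\, f^j_{il}$. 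Combined with antisymmetry in $i,j$, this shows that $f^l_{ij}$ is totally antisymmetric in its three indices.

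I do not anticipate a serious obstacle: the argument is essentially the observation that Killing fields are divergence-free, combined with one Leibniz-rule manipulation. The only points demanding a little care are keeping track of index positions when passing between bracket coefficients and inner products, and justifying the divergence-free property; both are standard.
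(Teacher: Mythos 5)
Your proof is correct and follows essentially the same route as the paper: express the structure constants as $L^2$-inner products, apply the Leibniz rule for the Lie derivative together with the Killing condition $\Lie_{u_i}g_K=0$, and kill the remaining derivative term by integrating over the closed manifold (the paper phrases this last step as $\int_K \Lie_{u_i}\bigl[g_K(u_j,u_l)\,\vol_{g_K}\bigr]=0$ via Stokes, which is equivalent to your divergence-free argument). No gaps.
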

\begin{proof} 
The Lie derivatives of the metric $g_K$ and volume form $\vol_{g_K}$ both vanish along Killing vector fields. Using this fact, the definition of the $L^2$-inner product of vector fields and Stokes' theorem, we can write
\begin{align}
\sum\nolimits_l \; f^l_{ij} \; \langle u_l, u_k \rangle_{L^2} \ &= \  \langle [u_i, u_j] ,  u_k \rangle_{L^2}  \ = \ \int_K  g_K ([u_i, u_j] ,  u_k) \; \vol_{g_\KKK}  \nonumber \linebr
&= \  \int_K \Big\{ \, \Lie_{u_i} \big[ g_K(u_j ,  u_k) \big] \ - \ g_K (u_j, [u_i, u_k])  \, \Big\} \; \vol_{g_\KKK} \nonumber \linebr 
&= \  \int_K \Lie_{u_i}  \big[\,  g_K(u_j ,  u_k)  \ \vol_{g_\KKK}\, \big] \ - \ \int_K g_K (u_j, [u_i, u_k])  \; \vol_{g_\KKK}   \nonumber \linebr
&= \ - \ \int_K g_K (u_j, [u_i, u_k])   \; \vol_{g_\KKK}  \ = \ -\;    \sum\nolimits_l  \;  f^l_{ik} \; \langle u_j, u_l \rangle_{L^2} \nonumber \ .
 \end{align}
Thus, when the basis $\{ u_j \}$ is $L^2$-orthonormal, we get that 
\[
 \sum\nolimits_l  f^l_{ij} \; \delta_{lk} +  f^l_{ik} \; \delta_{jl} \ = \ 0  \quad  \iff \quad  f^k_{ij} = \ - \; f^j_{ik} \ .
\]
The anti-symmetry of $f^l_{ij}$ in the lower two indices follows from its definition and is true for any basis.
\end{proof}

\section{Parallel transport in Riemannian submersions}
\label{ParallelTransport}

For reference, in this appendix we write down the equations of parallel transport of vectors along general curves $\gamma(s)$ on the higher-dimensional space $P=M\times K$. That space is assumed to be equipped with a submersive metric $g_P \simeq (g_M, A, g_K)$. Parallel transport is determined by its Levi-Civita connection $\nabla$. As in section \ref{NullCurves}, a vector $E$ in $TP$ can be decomposed into horizontal and vertical components, $E=E^\HH + E^\VV$, defined through \eqref{DefinitionHorizontalDistribution}. The projection of the curve $\gamma(s)$ to the base $M$ is denoted by $\gamma_M(s)$, so that $\gamma_M = \pi \circ \gamma$. The Levi-Civita connection on $M$ determined by the projected metric $g_M$ is denoted by $\nabla^M$. 

With this notation, the equations in \cite{ONeill} for the horizontal and vertical components of the covariant derivative $\nabla_{\dot{\gamma}} E$ along $\gamma (s)$ can be written as:
\bal \label{DecompositionParallelTransport}
g_P \big(\nabla_{\dot{\gamma}} E , \, V \big) \ = \ &  g_P \big( \nabla_{\dot{\gamma}} \, E^\VV , \, V \big)   \, +\, \frac{1}{2}\, F_A^a(\dot{\gamma}_M , \, E_M)\ g_P (  e_a ,  \,  V )   \, - \,   g_P \big( S_V \, \dot{\gamma}^\VV, \, E^\HH \big)      \nonumber   \linebr
g_P \big(\nabla_{\dot{\gamma}} E , \, Z \big) \ = \ &  g_M \big( \nabla^M_{\dot{\gamma}_M} \, E_M ,\, \pi_\ast Z \big)   \, +\, \frac{1}{2}\, F_A^a(\pi_\ast Z,\, \dot{\gamma}_M )\ g_P \big(  e_a ,  \,  E^\VV \big)  \, \nonumber \linebr 
&+\, \frac{1}{2}\, F_A^a(\pi_\ast Z,\, E_M )\ g_P (  e_a ,  \,  \dot{\gamma} ) \, +\, g_P \big( S_{\dot{\gamma}^\VV} \, E^\VV, \, Z \big)    \ 
\end{align}
Here $Z$ and $V$ are any horizontal and vertical vectors in $TP$, respectively. We have also translated O'Neill's notation to the one used in this paper, as described in the remark of appendix \ref{Auxiliary results}. As in section \ref{GeneralGeodesicsSubmersions}, the tensor $S$ is the second fundamental form of the fibres of $P$. It is essentially coincides with our $\dd^A  g_K$, as expressed in \eqref{IdentitySecondFundForm}.

The general equations \ref{DecompositionParallelTransport} reduce to \eqref{DecompositionGeodesicEquation} when we take $E= \dot{\gamma}$. The equations satisfied by a parallelly transported vector field $E_{\gamma (s)}$ can be obtained from \ref{DecompositionParallelTransport} by setting $\nabla_{\dot{\gamma}} E = 0$ on the left-hand sides.

\end{appendices}

\newpage


\renewcommand{\baselinestretch}{1.2}\normalsize

\addcontentsline{toc}{section}{References}

\vspace{1cm}

\end{document}